\pdfoutput=1
\documentclass[paper=a4]{scrartcl}

\usepackage[utf8]{inputenc}
\usepackage[T1]{fontenc}
\usepackage{algorithm,algorithmic}

\renewcommand{\algorithmiccomment}[1]{\hfill{\itshape #1}}
\usepackage{tikz}
\usetikzlibrary{shapes.arrows, positioning}
\usepackage{amsmath, amsthm, amssymb, mathtools}
\usepackage{lmodern}
\usepackage[strings]{underscore}
\usepackage{xspace}
\usepackage[shortlabels]{enumitem}
\usepackage{rotating}
\usepackage{hyperref}
\usepackage[version=4]{mhchem}
\usepackage{placeins}

\newcommand{\NN}{\mathbb{N}}
\newcommand{\QQ}{\mathbb{Q}}
\newcommand{\RR}{\mathbb{R}}
\newcommand{\ZZ}{\mathbb{Z}}
\newcommand{\mN}{\mathcal{N}}
\newcommand{\mM}{\mathcal{M}}
\newcommand{\mU}{\mathcal{U}}
\newcommand{\mP}{\mathcal{P}}
\newcommand{\ha}{\mathrel{\triangleright}}
\newcommand{\notha}{\mathrel{\not\triangleright}}
\newcommand{\wf}{\widehat f}
\newcommand{\wg}{\widehat g}

\newcommand{\wdelta}{\bar\delta}
\newcommand{\Rho}{\mathrm{P}}

\newcommand{\myd}[1]{\operatorname{d}\!{#1}}
\newcommand{\diff}[2]{\frac{\myd{#1}}{\myd{#2}}}
\newcommand{\pmat}[1]{\begin{pmatrix}#1\end{pmatrix}}
\newcommand{\pmatr}[1]{\begin{pmatrix*}[r]#1\end{pmatrix*}}
\newcommand{\pmatl}[1]{\begin{pmatrix*}[l]#1\end{pmatrix*}}

\newcommand{\spmatl}[1]{\left(\begin{smallmatrix*}[l]#1\end{smallmatrix*}\right)}
\newcommand{\diag}[2]{\left(\begin{smallmatrix}#1 & & \\ & \ddots & \\ & & #2\end{smallmatrix}\right)}

\newcommand\numberthis{\addtocounter{equation}{1}\tag{\theequation}}
\renewcommand{\emptyset}{\varnothing}
\newcommand{\biomd}[1]{BioModel #1}

\DeclareMathOperator{\round}{round} 
 
\DeclareMathOperator{\sgn}{sgn} 

\newtheorem{theorem}{Theorem}
\newtheorem{proposition}[theorem]{Proposition}
\newtheorem{lemma}[theorem]{Lemma}
\newtheorem{corollary}[theorem]{Corollary}

\allowdisplaybreaks[1]

\KOMAoptions{
  abstract=true,
  fontsize=10pt,
  DIV=12,
  headinclude=false,
  footinclude=false,
  captions=tableheading
}

\addtokomafont{author}{\large}

\recalctypearea

\begin{document}

\title{Algorithmic Reduction of Biological Networks With Multiple Time Scales}

\author{
  Niclas Kruff, RWTH Aachen University, Germany\\
  \texttt{niclas.kruff@matha.rwth-aachen.de}
  \and
  Christoph Lüders, University of Bonn, Germany\\
  \texttt{chris@cfos.de}
  \and
  Ovidiu Radulescu,
  University of Montpellier and CNRS UMR5235 LPHI, France\\
  \texttt{ovidiu.radulescu@umontpellier.fr}
  \and
  Thomas Sturm,
  CNRS, Inria, and the University of Lorraine, France\\
  MPI Informatics and Saarland University, Germany\\
  \texttt{thomas.sturm@loria.fr}
  \and
  Sebastian Walcher, RWTH Aachen University, Germany\\
  \texttt{walcher@matha.rwth-aachen.de}
}

\date{March 2021}

\maketitle

\begin{abstract}
  We present a symbolic algorithmic approach that allows to compute invariant
  manifolds and corresponding reduced systems for differential equations
  modeling biological networks which comprise chemical reaction networks for
  cellular biochemistry, and compartmental models for pharmacology, epidemiology
  and ecology. Multiple time scales of a given network are obtained by scaling,
  based on tropical geometry. Our reduction is mathematically justified within a
  singular perturbation setting. The existence of invariant manifolds is subject
  to hyperbolicity conditions, for which we propose an algorithmic test based on
  Hurwitz criteria. We finally obtain a sequence of nested invariant manifolds
  and respective reduced systems on those manifolds. Our theoretical results are
  generally accompanied by rigorous algorithmic descriptions suitable for direct
  implementation based on existing off-the-shelf software systems, specifically
  symbolic computation libraries and Satisfiability Modulo Theories solvers. We
  present computational examples taken from the well-known BioModels database
  using our own prototypical implementations.
\end{abstract}

\section{Introduction}

Biological network models describing elements in interaction are used in many
areas of biology and medicine. Chemical reaction networks are used as models of
cellular biochemistry, including gene regulatory networks, metabolic networks,
and signaling networks. In epidemiology and ecology, compartmental models can be
described as networks of interactions between compartments. Both in chemical
reaction networks and in compartmental models, the probability that two elements
interact is assumed proportional to their abundances. This property, called mass
action law in biochemistry, leads to polynomial differential equations in the
kinetics.

For differential equations that describe the development of such networks over
time, a crucial question is concerned with reduction of dimension. We illustrate
such a reduction and the steps involved for the classic Michaelis--Menten system,
an archetype of enzymatic reactions. This system is described by the chemical
reactions
\[
  \ce{E + S <=>[k_1][k_{-1}] ES ->[k_2] E + P},
\] 
where \ce{E}, \ce{S}, \ce{ES}, and \ce{P} are the enzyme, substrate,
enzyme-substrate complex, and product, respectively. The mechanism has two
conserved quantities $\ce{[E]} + \ce{[ES]} = c_1$ and
$\ce{[S]} + \ce{[ES]} + \ce{[P]} = c_2$, where $c_1$ and $c_2$ are constant
functions representing the respective total concentrations. Let us choose the
concentration units such that $c_2=1$, and furthermore rename
$c_1 = \varepsilon$. In the ordinary differential equations describing the kinetics of
\ce{[E]}, \ce{[S]}, \ce{[ES]}, and \ce{[P]} according to
\cite[Sect.~2.1.2]{Feinberg:19a} we eliminate the variable
$\ce{[E]}= \varepsilon - \ce{[ES]}$. This yields a reduced system made of variables
$y_1 = \ce{[S]}$ and $y_2 = \ce{[ES]}$ as follows:
\begin{align*}
\dot{y_1} &=-\varepsilon k_1y_1+(k_1y_1+k_{-1})y_2\\
\dot{y_2} &=\varepsilon k_1y_1-(k_1y_1+k_{-1}+k_2)y_2.
\end{align*}
Notice that $\dot{\ce{[P]}}$ is already determined by the algebraic conservation
constraint $\ce{[P]} = 1 - y_1 - y_2$. The parameter $\varepsilon$ represents the ratio of
total concentrations $c_1$ to $c_2$. The general idea is that $\varepsilon$ is small.

In a first step toward reduction, a scaling transformation $y_1=x_1$ and
$y_2=\varepsilon x_2$ yields
\begin{align*}
\dot{x_1} &= \varepsilon (-k_1x_1+(k_1x_1+k_{-1})x_2)  \\
\dot{x_2} &= k_1x_1-(k_1x_1+k_{-1}+k_2)x_2.
\end{align*}
In a second step, one uses singular perturbation theory to obtain the famous
Michaelis--Menten equation. It consists of two components: First, we obtain a one
dimensional invariant manifold given approximately by the quasi-steady state
condition $k_1x_1-(k_1x_1+k_{-1}+k_2)x_2=0$. This considers the fast variable
$x_2$ to be at the steady state and lowers dimension from two to one. Second, we
obtain a reduced system for the slow variable:
\[
\dot{x_1} = -\varepsilon\frac{k_1k_2x_1}{k_1x_1+k_{-1}+k_2}.
\]

With our example, we paraphrased the approach in a seminal paper by Heineken et
al.~\cite{hta}, which was the first one to rigorously discuss quasi-steady state
from the perspective of singular perturbation theory. Realistic network models
may have many species and differential equations. Considerable effort has been
put into model order reduction, i.e., finding approximate models with a smaller
number of species and equations, where the reduced model can be more easily
analyzed than the full model \cite{rgzn}.

The scaling of parameters and variables by a small parameter $\varepsilon$ and the study
of the limit $\varepsilon \to 0$ is central in singular perturbation theory. It is rather
obvious that arbitrary scaling transformations are unlikely to provide useful
information about a given system. Successful scalings, in contrast, are
typically related to the existence of nontrivial invariant manifolds.
Applications of scaling rely on the observation that, loosely speaking, any
result that holds asymptotically for $\varepsilon \to 0$ remains valid for sufficiently
small positive $\varepsilon_*$, provided some technical conditions are satisfied. To
determine scalings of polynomial or rational vector fields that model biological
networks, tropical equilibration methods were introduced and developed in a
series of papers by Noel et al.~\cite{noelgvr}, Radulescu et al.~\cite{rvg},
Samal et al.~\cite{sgfwr,sgfr}, and others. These methods open a feasible path
for biological networks of high dimension. For a given system they provide a
list of possible slow-fast systems, which may or may not yield invariant
manifolds and reduced equations. Other methods due to Goeke et al.~\cite{gwz},
and recently extended to multiple time scales by Kruff and Walcher~\cite{krwa},
determine critical parameter values and manifolds for singular perturbation
reductions.

The principal purpose of the present paper is to complement scaling with an
algorithmic test for the existence of invariant manifolds and the computation of
those manifolds along with corresponding reduced systems of differential
equations. In the asymptotic limit, methods from singular perturbation theory,
principally developed by Tikhonov \cite{tikh} and Fenichel \cite{fenichel}, are
available. A recent extension to multiscale systems by Cardin and Teixeira
\cite{cartex} turns out to be a valuable tool for the systematic computation of
reductions with nested invariant manifolds and allows an algorithmic approach.

In the language of dynamical systems, the behavior of systems with multiple time
scales can be described as follows. All variables evolve towards the steady
state of the ordinary differential equation that they follow, but not with the
same speed, in other words, not within the same \emph{time scale}. \emph{Slow
  variables} correspond to long time scales, and \emph{fast variables}
correspond to short time scales. The steady state of a subset of the ordinary
differential equations is called \emph{quasi-steady state}, and the evolution of
a variable or group of variables towards its quasi-steady state is called
\emph{relaxation}. At a given time scale, a group of variables relaxes towards
one of their quasi-steady states. The set of all quasi-steady states of
variables relaxing within the same time scale forms a \emph{critical manifold},
which provides the lowest order approximation to the corresponding
\emph{invariant manifold}. In this article, we do not distinguish between
critical and invariant manifolds, since we do not discuss higher order
approximations. All slower variables can be considered fixed, and all faster
variables have already relaxed and satisfy quasi-steady state conditions. As the
number of relaxed variables increases and thus the set of quasi-steady state
conditions grows, the respective invariant manifolds get nested so that later
manifolds are contained in earlier ones. Local linear approximations of these
manifolds were proposed by Valorani and Paolucci \cite{valorani2009g} using
numerical methods based on the local Jacobian. However, to the best of our
knowledge, constructive approaches providing nonlinear descriptions of these
manifolds and reduced models are still missing.

From a computer science point of view, we propose a novel symbolic
computation-based algorithmic workflow for the reduction process outlined above.
This includes in particular the automatic verification of certain hyperbolicity
conditions required for the validity of the reductions. We restrict ourselves to
the case of polynomial differential equations that covers mass action chemical
reaction networks and compartmental models. We present a series of algorithms
that takes as input a system of polynomial autonomous ordinary differential
equations together with numerical information related to the desired coarse
graining of the scaling. As output one finally obtains a collection of nested
invariant manifolds for the input system, associated with smaller dimensional
systems that govern the dynamics on those manifolds. This output establishes the
reduced systems discussed above.

The computationally hard parts of our methods are reduced to decision problems
in interpreted first-order logic over various theories. It turns out that
quantifier alternation can be entirely avoided, so that the Satisfiability
Modulo Theories (SMT) framework by Nieuwenhuis et al.
\cite{NieuwenhuisOliveras:06b} can be applied. Several corresponding SMT solvers
are freely available and professionally supported \cite{cvc4,
  mathsat,smtrat,z3}. It is remarkable that we arrive with our comprehensive
algorithmic work here at SMT sub-problems for several different logics,
viz.~linear integer arithmetic, linear real arithmetic, and non-linear real
arithmetic. The algorithms presented here are suitable for straightforward
implementation provided that a symbolic computation library, or computer algebra
system, and an SMT solver are available. We created two independent prototypical
realizations in software ourselves, one in Python using freely available
libraries, and one in Maple.

The plan of the paper is as follows: In Sect.~\ref{se:scale} we introduce an
abstract scaling procedure, which assumes, for given $0 < \varepsilon_* < 1$, the
existence of families of exponents $c_{k,J}$ and $d_k$ for scaling polynomial
coefficients and variables, respectively. From the scaled system, higher order
terms are truncated, and the obtained system is partitioned into several time
scales, ordered from fastest to slowest. A corresponding generic algorithm uses
black-box functions $c$ and $d$. In Sect.~\ref{se:tropscale}, we make precise
one possible way to realize $c$ and $d$, based on tropical geometry. So far, our
transformations are mostly of formal nature. On these grounds, we
algorithmically determine in Sect.~\ref{se:spt} invariant manifolds and
corresponding reduced systems, which makes the formal scaling meaningful in a
mathematically precise way. In general, this is possible only for a certain
number $\ell$ of time scales, where $\ell$ is explicitly found and---in contrast to
existing alternative approaches---often larger than $2$. Technically, we apply
recent results by Cardin and Teixeira \cite{cartex} based on Fenichel theory. In
Sect.~\ref{se:simpl}, we employ symbolic computation techniques, specifically
Gröbner basis theory, to equivalently simplify our reduced systems, which are
still scaled in terms of $\varepsilon_*$, $c$, and $d$. In Sect.~\ref{se:tb}, we finally
transform back to the principal scale of the original system while preserving
the obtained multiple time scales and the structure of the corresponding reduced
systems. In particular, the various time scale factors remain explicit. In
Sect.~\ref{se:picture}, we summarize what we have gained from the overall
procedure for our original input. At this point, the mathematical development of
our framework has been accompanied by nine algorithms, and we give a tenth
top-level algorithm, which makes precise how various modules are combined and
interact with one another. In Sect.~\ref{se:compex} we discuss computational
examples with our prototypical software mentioned above. We consider models from
the BioModels database, a repository of mathematical models of biological
processes \cite{le2006biomodels}. The focus is on successful reductions for
biologically interesting examples. This is counterbalanced in Appendix
\ref{app:examples} by further examples to support the understanding of our
algorithms. This also provides some examples where we do not obtain meaningful
reductions. In Sect.~\ref{se:complexity} we highlight some computational steps
in our algorithms from the point of view of asymptotic worst-case complexity. In
Sect.~\ref{se:conclusion}, we wrap up and point at possible future research
directions.

\section{Scaling of Polynomial Vector Fields}

In what follows, we adopt a rather general scaling formalism that has been used
recently in \cite{noel2012tropical,noelgvr,rgzn,rvg,sgfwr} and is recurrent in
the literature on singular perturbations, see for instance
\cite[Sect.~3]{nipp1988algorithmic}. We use the convention that the natural
numbers $\NN$ include $0$.

\subsection{An Abstract Scaling Procedure}\label{se:scale}
Our starting point is a parameter dependent system $S$ of polynomial
differential equations
\begin{equation}
\label{origeq}
\dot{y_k} := \diff{y_k}{t} = \sum_J \gamma_{k,J}y^J,\quad 1\leq k\leq n,
\end{equation}
where the summation ranges over multi-indices $J=(j_1,\ldots,j_n)\in \NN^n$,
$\gamma_{k,J} \in \RR$, and only finitely many $\gamma_{k,J}$ are non-zero. We abbreviate
$y^J=y_1^{j_1}\cdots y_n^{j_n}$, as usual. In terms of network models, $y_k$
represents the concentration of either a chemical species or a type of
individual in a compartment. Note that we use positive integers as indices,
instead of concrete names for species and compartments. The real coefficients
$\gamma_{k,J}$ describe actions of other species or individuals on the species or
individual $k$. If these actions are activations one has $\gamma_{k,J}>0$, whereas
for repressions one has $\gamma_{k,J}<0$. Several species may interact to produce an
action on a given species $k$. This information is contained in the number of
non-zero components of $J$, which is called the order of the action. This
terminology is inspired from chemical reactions, where the order represents
essentially the number of reactant species.

Throughout this paper, we require that positive $y_k$ remain positive as time
progresses. In other words, the positive first orthant
$\mU = (0, \infty)^n \subseteq \RR^n$ is positively invariant for system \eqref{origeq}, which
is the case, e.g., in chemical reaction networks when $\gamma_{k,J}y^J \geq 0$ on all
intersections of hyperplanes $\{\,(y_1, \dots, y_n) \in \RR^n \mid y_k = 0\,\}$ with
$\overline\mU$.

We fix some small $\varepsilon_* \in (0,1)$, and we impose that
\begin{equation}
\label{coeffscale}
\gamma_{k,J}=\varepsilon_*^{c_{k,J}}\bar{\gamma}_{k,J},
\end{equation}
with rational numbers $c_{k,J}$. The tacit understanding is that only nonzero
$ \gamma_{k,J}$ are being considered. The intuitive idea, which will be made more
precise in Sect.~\ref{se:tropscale}, is that the $\bar{\gamma}_{k,J}$ are close to
one. Moreover, we introduce a positive parameter $\varepsilon$ and consider the system
\begin{equation}
\label{paramscaleeq}
\dot{y_k} = \sum_J \varepsilon^{c_{k,J}}\,\bar{\gamma}_{k,J}\,y^J ,\quad 1\leq k\leq n,
\end{equation}
with $\varepsilon$-dependent coefficients. Notice that \eqref{paramscaleeq} matches
\eqref{origeq} at $\varepsilon=\varepsilon_*$. By renormalizing
$y_k=\varepsilon^{d_k}x_k$, $d_k\in\mathbb Q$, one obtains a system in scaled variables
\begin{equation}
\label{fullscaleeq}
\dot{x_k}=\sum_J \varepsilon^{c_{k,J}+\left<D,J\right>-d_k}\,\bar{\gamma}_{k,J}\,x^J,\quad 1\leq k\leq n,
\end{equation}
with $D=(d_1,\ldots,d_n)$ and the dot product in $\RR^n$ denoted by
$\left<\cdot,\cdot\right>$. This transformation preserves the positive invariance of
$\mU$. The scaling comes with the implicit assumption that for $i$,
$j \in \{1, \dots, n\}$, the relative order of $y_i$ with respect to $y_j$ is bounded
by $y_i/y_j = \Theta(\varepsilon^{d_i-d_j})$ for $\varepsilon \to 0$, so that all
$x_k$ get the same order of magnitude. Continuing, we set
$\nu_k = \min \{\, c_{k,J}+\left<D,J\right>-d_k \mid \bar{\gamma}_{k,J}\not=0 \,\}$ to
obtain
\begin{equation}
\label{fullscaleeqnorm}
\dot{x_k}=\varepsilon^{\nu_k}\sum_J \varepsilon^{c_{k,J}+\left<D,J\right>-d_k-\nu_k}\,\bar{\gamma}_{k,J}\,x^J,\quad 1\leq k\leq n,
\end{equation}
where now all exponents of $\varepsilon$ inside the sums are nonnegative. Finally one may
perform a preliminary time scaling $\tau=\varepsilon^\mu t$,
$\mu=\min \, \{ \nu_1, \ldots, \nu_n \}$ to arrive at
\begin{equation}
\label{fullscaleeqnormplus}
x_k' := \diff{x_k}{\tau} = \varepsilon^{\nu_k-\mu}\sum_J \varepsilon^{c_{k,J}+\left<D,J\right>-d_k-\nu_k}\,\bar{\gamma}_{k,J}\,x^J,\quad 1\leq k\leq n,
\end{equation}
with all exponents nonnegative. We are interested in system
\eqref{fullscaleeqnormplus} for variable $\varepsilon>0$, in the asymptotic limit $\varepsilon \to 0$.

We restructure \eqref{fullscaleeqnormplus} by collecting all variables with
equal $\nu_i-\mu$ in vectors $z_1$, \dots,~$z_m$, where $z_k \in \RR^{n_k}$ for
$k \in \{1, \dots, m\}$, in ascending order of exponents and such that
$n_1 + \dots + n_m = n$. We obtain a system of the form
\begin{equation}
\label{fullscalesimplified}
z_k'= \varepsilon^{a_k}\widetilde f_k(z,\varepsilon)
= \varepsilon^{a_k}\left(\widetilde f_k(z,0)+\varepsilon^{a_{k,2}'}p_{k,2}+\cdots+\varepsilon^{a_{k,w_{k}}'}p_{k,w_{k}}\right)
= \varepsilon^{a_k}\left(\widetilde f_k(z,0)+o(1)\right), \quad 1\leq k\leq m,
\end{equation}
where $a_k$, $a_{k,j}'\in \QQ$, $0=a_1<a_2<\ldots < a_m$, $0 < a_{k,j}'$, and
$p_{k,j}$ are multivariate polynomials in $z$ for $1\leq k\leq m$ and
$2 \leq j \leq w_k$. Note that the case $m=1$ is not excluded. By substituting
$\delta := \varepsilon^{1/q}$ with a sufficiently large positive integer $q$, one ensures that
only nonnegative integer powers of $\delta$ appear:
\begin{equation}
\label{fullscalesimplifiedint}
z_k'= \delta^{b_k}\wf_k(z,\delta)
= \delta^{b_k}\left(\wf_k(z,0)+\delta^{b_{k,2}'}p_{k,2}+\cdots+\delta^{b_{k,w_{k}}'}p_{k,w_{k}}\right)
= \delta^{b_k}\left(\wf_k(z,0)+o(1)\right),
\quad 1\leq k\leq m,
\end{equation}
where $b_k$, $b_{k,j}'\in \NN$, $0=b_1<b_2<\ldots < b_m$, $0 < b_{k,j}'$ for
$1\leq k\leq m$ and $2 \leq j \leq w_k$.

Our idea is that the indices $k$ correspond to different time scales
$\delta^{b_k}\tau$. For $m>1$, system \eqref{fullscalesimplifiedint}, as
$\delta\to0$, may be thought of as separating fast variables from increasingly slow
ones. It will turn out in Sect.~\ref{se:spt} that the exact number of time
scales finally obtained by our overall approach can actually be smaller than
$m$.

Given certain conditions, which will be made explicit in Theorem~\ref{th:ct} and
with its application in Sect.~\ref{se:extfen}, we may formally truncate the
right hand sides of \eqref{fullscalesimplifiedint} and keep only terms of lowest
order in $\delta$:
\begin{equation}
\label{eq:scaleandtruncate}
z_k'= \delta^{b_k}\wf_k(z,0), \quad 1\leq k \leq m.
\end{equation}
In the following, we refer to the transformation process from \eqref{origeq} to
\eqref{fullscalesimplifiedint} as \emph{scaling}. Strictly speaking, this
comprises scaling in combination with \emph{partitioning}. We refer to the step
from \eqref{fullscalesimplifiedint} to \eqref{eq:scaleandtruncate} as
\emph{truncating}.

Algorithm~\ref{alg:scale} 
\begin{algorithm}
  \renewcommand{\algorithmiccomment}[1]{%
    \hfill\parbox{\widthof{$\in \overline\QQ[x_1, \dots, x_n][\delta]$}}{\textcolor{gray}{#1}}}
  \caption{$\operatorname{ScaleAndTruncate}$\label{alg:scale}}  
  \begin{algorithmic}[1]
    \REQUIRE
    \begin{enumerate}[noitemsep]
    \item
    A list $S = \big[ \diff{y_1}{t} = f_1, \dots, \diff{y_n}{t} = f_n \big]$ of
    autonomous first-order ordinary differential equations where $f_1$, \dots,
    $f_n \in \QQ[y_1, \dots, y_n]$;
    \item $c: \{1, \dots, n\} \times \{1, \dots, n\}^n \to \QQ$;
    \item $d: () \to \QQ^n \cup \{\bot\}$;
    \item $\varepsilon_* \in (0,1) \cap \QQ$
    \end{enumerate}
    \smallskip
    
    \ENSURE 
    \begin{enumerate}
    \item A list $[T_1, \dots, T_m]$ where, abbreviating $\diff{}{\tau}$ by a prime,
      $T_k = (z_k' = \delta^{b_k} f_k)$ with
      $z_k' \subseteq [x_1', \dots, x_n']$,
      $\bigcup_k z_k' = [x_1', \dots, x_n']$, $z_1'$, \dots,~$z_m'$ pairwise disjoint,
      $b_1 < \dots < b_m \in \NN$, and $f_k \subseteq \QQ[x_1, \dots, x_n]$, or the empty list;
    \item A list $[P_1, \dots, P_m]$ of lists with $P_k \subseteq \QQ[x_1, \dots, x_n][\delta]$ and
      $|P_k| = |T_k|$ for $k \in \{1, \dots, m\}$; 
    \item A substitution $\sigma$ for $x_1$, \dots, $x_n$, $\tau$, $\delta$, and $\varepsilon$
    \end{enumerate}
    \smallskip
    
    The first output $[T_1, \dots, T_m]$ contains differential equations
    $z_k' = \delta^{b_k} \wf_k(z, 0)$ for $k \in \{1, \dots, m\}$ in terms of
    system~\eqref{fullscalesimplifiedint}. The second output $[P_1, \dots, P_m]$
    contains the higher order terms in \eqref{fullscalesimplifiedint} as
    polynomials
    $p_k = \delta^{b_k+b_{k,2}'}p_{k,2}+\cdots+\delta^{b_k+b_{k,w_{k}}'}p_{k,w_{k}}$. The last
    output is a substitution that undoes all substitutions applied for obtaining
    \eqref{fullscalesimplifiedint} from \eqref{origeq}.\smallskip
    
    This gives the following invariant: Denote
    $\widetilde S = \bigl(\bigcup_{k=1}^m T_k \oplus P_k\bigr)\sigma$, where
    $(x' = g) \oplus p$ stands for $x' = g + p$ and is applied elementwise. Then
    $\widetilde S$ is equal to $S$ up to multiplication of the differential
    equation $\dot{y_i} = \sum_J \gamma_{i,J}y^J$ in $S$ with a positive scalar factor
    $1/\varepsilon_*^{\mu+d_i}$.\smallskip

    For $q \in \QQ[x_1, \dots, x_n](\delta)$ we use $\deg_\delta(q)$ for the univariate degree of
    $q$ in $\delta$. Similarly, $\operatorname{tmon}_\delta(q)$ is the trailing monomial
    in $\delta$.
    \smallskip

    \IF {\label{sat:dtest}$d() = \bot$}
    \RETURN {$[\, ]$, $[\, ]$, $[\, ]$}
    \ENDIF
    \STATE{$\mu := \infty$}
    \STATE{$q := 1$}
    \STATE{$(d_1, \dots, d_n) := d()$
      \COMMENT{$\in \QQ^n$}}
    \FOR{$k := 1$ \TO $n$}
    \STATE{$h_k := 0$}
    \FORALL{monomials $\gamma y^J$ \textbf{in} $f_k$}
    \STATE{$\bar\gamma := \gamma / \varepsilon_*^{c(k, J)}$
    \COMMENT{$\in \overline\QQ$}}
    \STATE{$\eta := c(k, J) + \langle (d_1, \dots, d_n), J \rangle - d_k$
      \COMMENT{$\in \QQ$}}
    \STATE{$\mu := \min(\mu, \eta)$
      \COMMENT{$\in \QQ$}}
    \STATE{$q := \operatorname{lcm}(q, \operatorname{denom} \eta)$
      \COMMENT{$\in \NN \setminus \{0\}$}}
    \STATE{$h_k := h_k + \varepsilon^{\eta} \bar\gamma x^J$}
    \ENDFOR
    \ENDFOR
    \FOR{$k:=1$ \TO $n$}
    \STATE{$h_k := h_k / \varepsilon^{\mu}$}
    \STATE{$h_k := h_k[\varepsilon \gets \delta^q]$
      \COMMENT{$\in \overline\QQ[x_1, \dots, x_n][\delta]$}}
    \STATE{$g_k := \operatorname{tmon}_\delta h_k$}
    \STATE{$p_k := h_k - g_k$}
    \ENDFOR
    \STATE{$L := \big[\diff{x_1}{\tau} = g_1, \dots, \diff{x_n}{\tau} = g_n\big]$}
    \STATE{$[b_1, \dots, b_m] := \operatorname{sort}(\deg_\delta g_1, \dots, \deg_\delta g_n)$,
      ascending and removing duplicates}
    \FOR{$k:=1$ \TO $m$}
    \STATE{$T_k := [\,\diff{x}{\tau} = g \in L \mid \deg_\delta g = b_k\,]$}
    \STATE{$P_k := [\,p_j \in \{p_1, \dots, p_n\} \mid \deg_\delta g_j = b_k\,]$}
    \ENDFOR
    \STATE{$\sigma := [x_1 \gets y_1 / \varepsilon^{d_1}, \dots, x_n \gets y_n / \varepsilon^{d_n}] \circ
      [\tau \gets \varepsilon^\mu t] \circ [\delta \gets \varepsilon^{1/q}] \circ [\varepsilon \gets \varepsilon_*]$}
    \RETURN{$[T_1, \dots, T_m]$, $[P_1, \dots, P_m]$, $\sigma$}
  \end{algorithmic}
\end{algorithm}
reflects our discussions so far. It takes as input a list $S$ of differential
equations representing system~\eqref{origeq} and a choice of
$0 < \varepsilon_* < 1$ for \eqref{coeffscale}. For our practical purposes, the polynomial
coefficients in $S$ as well as $\varepsilon_*$ are taken from $\QQ$. Our algorithm is
furthermore parameterized with a function $c$ mapping suitable indices to
rational numbers and a constant function $d$ yielding either a tuple
$D=(d_1, \dots, d_n)$ of rational numbers or $\bot$. The black-box functions $c$ and
$d$ reflect the mathematical assumptions around \eqref{coeffscale} and
\eqref{fullscaleeq} that suitable $c_{k,J}$ and $d_k$ exist, respectively.
Suitable instantiations for the parameters $c$ and $d$ can be realized, e.g.,
using tropical geometry, which will be the topic of the next section. It will
turn out that instantiations of $d$ can fail on the given combination of $S$ and
$\varepsilon_*$, which is signaled by the return value $\bot$ of $d$, and checked right away
in l.\ref{sat:dtest} of Algorithm~\ref{alg:scale}.

\subsection{Scaling via Tropical Geometry}\label{se:tropscale}

So far, the above transformations leading to \eqref{fullscaleeq} are a formal
exercise. No particular strategy was applied for choosing
$\varepsilon_* \in(0,1)$. Early model reduction studies used dimensional analysis to obtain
$\varepsilon_*$ as a power product in model parameters \cite{hta,segel1989quasi}.

Here we discuss a different approach, based on tropical geometry, also called
max-plus or idempotent algebra. This is a relatively recent field of mathematics
that draws its origins from fields as diverse as algebraic geometry,
optimization, and physics \cite{litvinov2009tropical}. In all these fields,
tropical geometry appears as a technique to simplify non-linear objects.
Polynomials are replaced by piecewise-linear functions, and geometrical problems
are transformed into combinatorial problems \cite{mikhalkin2005enumerative}.

Tropical geometry is natural for any computation with orders of magnitude. In
physics, it occurs as Litvinov--Maslov dequantization of real numbers leading to
degeneration of complex algebraic varieties into tropical varieties
\cite{litvinov2007maslov,viro2001dequantization}. The name dequantization
originates from the formal analogy between this limiting process and
Schrödinger's dequantization that turns quantum physics into classical physics
when the Planck constant is considered a small parameter $\varepsilon$ that tends to zero.
Closer, in the physical world, to Litvinov--Maslov dequantization are the
vanishing viscosity phenomena, well known in problems of wave propagation. In
mathematics, tropical varieties and prevarieties establish a modern tool in the
theory of Puiseux series \cite{bogart2007computing}.

In contrast to the dimensional analysis approach mentioned above, the value
$\varepsilon_*$ is now not dictated by physico-chemistry. Instead, it is freely chosen to
provide ``power'' parametric descriptions of all the quantities occurring in the
differential equations (parameters, monomials, time scales), in a similar way to
describing curves by continuously varying real parameters.

Next, we explain how to obtain the orders $c_{k,J}$ and $D$ introduced in the
previous section with \eqref{coeffscale} and \eqref{fullscaleeq}, respectively.
The orders $c_{k,J}$ are computed from $\varepsilon_* \in (0,1)$ and $\gamma_{k,J}$ as
\begin{equation}\label{rationalindex}
c_{k,J} = \frac{\round (p \log_{\varepsilon_*} |\gamma_{k,J}|)}{p}.
\end{equation}
The function $\round: \RR \to \ZZ$ rounds to nearest, ties to even, in the sense of
IEEE~754 \cite{IEEE754}. The positive integer $p$ controls the precision of the
rounding step. Using
$\bar \gamma_{k,J} = \gamma_{k,J} / \varepsilon_*^{c_{k,J}}$ as defined in \eqref{coeffscale}, our
definition satisfies the constraint
$\varepsilon_*^{1/(2p)} \leq | \bar \gamma_{k,J} | \leq \varepsilon_*^{-1/(2p)}$. The orders
$D=(d_1, \dots, d_n)$ satisfy certain constraints as well. Those constraints result
heuristically from the idea of compensation of dominant monomials
\cite{noel2012tropical}. Slow dynamics is possible if for each dominant,
i.e.,~much larger than the other, monomial on the right hand side of
\eqref{fullscaleeqnormplus}, there is at least one other monomial of the same
order but with opposite sign. This condition, named tropical equilibration
condition \cite{noel2012tropical,noelgvr,rgzn,rvg,sgfwr,sgfr}, reads
\begin{equation}
\label{Teq}
\min_{\gamma_{k,J} > 0}   (c_{k,J}+\left<D,J\right>)
= \min_{\gamma_{k,J'} < 0}   (c_{k,J'}+\left<D,J'\right>).
\end{equation}

On these grounds, given system \eqref{origeq}, the choice of $\varepsilon_*$ boils down to
defining orders of magnitude. Model parameters are coarse-grained and
transformed to orders of magnitude in order to apply tropical scaling. The
result depends on which parameters are close and which are very different as
dictated by the coarse-graining procedure, i.e., by the choice of
$\varepsilon_*$. Decreasing $\varepsilon_*$ destroys details, and parameters tend to have the same
order of magnitude. Increasing $\varepsilon_*$ refines details, and parameters range over
several orders of magnitude. For instance, using \eqref{rationalindex} and
$p=1$, parameters $k_1=0.1$ and $k_2=0.01$ have orders $c_1 = 1$ and $c_2 =2$
for $\varepsilon_*=1/10$ but $c_1=c_2=1$ for $\varepsilon_*=1/50$. This is the perspective taken in
\cite{noel2012tropical,noelgvr,sgfwr}.

On the one hand, we have just seen that smaller choices of $\varepsilon_*$ possibly hide
details. On the other hand, in the following section we are going to review
singular perturbation methods, which provide asymptotic results as a small
parameter $\delta$ approaches zero. Following the construction in
Sect.~\ref{se:scale}, small choices of $\varepsilon_*$ lead to small $\delta$, which gives a
heuristic argument for choosing $\varepsilon_*$ rather small. Thus in practice one has to
reconcile two competing requirements, which unfortunately, still requires some
human intuition.

We are now ready to instantiate the black-box functions $c$ and $d$ in our
generic Algorithm~\ref{alg:scale} with tropical versions as given in
Algorithm~\ref{alg:tropicalc} and Algorithm~\ref{alg:tropicald}, respectively.
\begin{algorithm}
  \renewcommand{\algorithmiccomment}[1]{%
    \hfill\parbox{\widthof{$\in \QQ$}}{%
      \textcolor{gray}{#1}}}
  \caption{$\operatorname{TropicalC}$\label{alg:tropicalc}}
  \begin{algorithmic}[1]
    \REQUIRE
    \begin{enumerate}[nolistsep]
    \item $k \in \{ 1, \dots, n \}$;
    \item $J \in \{ 1, \dots, n \}^n$;
    \item A list $S = [\dot{y_1} = f_1, \dots, \dot{y_n} = f_n]$ of
    autonomous first-order ordinary differential equations where $f_1$, \dots,
    $f_n \in \QQ[y_1, \dots, y_n]$;
    \item $\varepsilon_* \in (0,1) \cap \QQ$.
    \item $p \in \NN \setminus \{0\}$
    \end{enumerate}
    \smallskip
    
    \ENSURE
    $c \in \QQ$
    \smallskip
    \STATE{$\gamma := \operatorname{coeff}(f_k, y^J)$
    \COMMENT{$\in \QQ$}}
    \STATE{$c := \round(p \log_{\varepsilon_*} |\gamma|) / p$
    \COMMENT{$\in \QQ$}}
    \RETURN{$c$}
  \end{algorithmic}
\end{algorithm}
\begin{algorithm}
  \renewcommand{\algorithmiccomment}[1]{%
    \hfill\parbox{\widthof{$\in \QQ$}}{%
      \textcolor{gray}{#1}}}
  \caption{$\operatorname{TropicalD}$\label{alg:tropicald}}
  \begin{algorithmic}[1]
    \REQUIRE
    \begin{enumerate}[nolistsep]
    \item A list $S = [\dot{y_1} = f_1, \dots, \dot{y_n} = f_n]$ of
    autonomous first-order ordinary differential equations where $f_1$, \dots,
    $f_n \in \QQ[y_1, \dots, y_n]$;
    \item $\varepsilon_* \in (0,1) \cap \QQ$.
    \item $p \in \NN \setminus \{0\}$
    \end{enumerate}
    \smallskip
    
    \ENSURE
    $(d_1, \dots, d_n) \in \QQ^n \cup \{\bot\}$
    \smallskip

    \STATE{\label{td:te}$\Pi(a_1, \dots, a_n) := \operatorname{TropicalEquilibration}(S, \varepsilon_*, p)$}
    \IF{\label{td:sat}\NOT $\RR \models \exists a_1 \dots \exists a_n \Pi$}
    \RETURN{\label{td:bot}$\bot$}
    \ENDIF
    \STATE{\label{td:choice}$(d_1, \dots, d_n) := \text{one possible choice for $a_1$, \dots, $a_n$}$}
    \RETURN{$(d_1, \dots, d_n)$}
  \end{algorithmic}
\end{algorithm}
\begin{algorithm}[t]
  \renewcommand{\algorithmiccomment}[1]{%
    \hfill\parbox{\widthof{$\langle A_k - A_\ell \mid A_0 \rangle \in \QQ[a_1, \dots, a_n]$}}{%
      \textcolor{gray}{#1}}}
  \caption{\label{alg:te}$\operatorname{TropicalEquilibration}$\label{alg:tropicalize}}
  \begin{algorithmic}[1]
    \REQUIRE
    \begin{enumerate}[nolistsep]
    \item
    A list $S = [\dot{y_1} = f_1, \dots, \dot{y_n} = f_n]$ of
    autonomous first-order ordinary differential equations where $f_1$, \dots,
    $f_n \in \QQ[y_1, \dots, y_n]$;
    \item
    $\varepsilon_* \in (0,1) \cap \QQ$;
    \item
    $p \in \NN \setminus \{0\}$.
    \end{enumerate}
    \smallskip
    
    \ENSURE A formula $\Pi(a_1, \dots, a_n)$ describing a finite union of convex
    polyhedra in $\RR^n$.
    \unskip\smallskip

    \STATE{$A_0 := (1, a_1, \dots, a_n)$}
    \COMMENT{$\in \QQ[a_1, \dots, a_n]^{n+1}$}
    \FOR{$j:=1$ \TO $n$}
    \STATE{$c := 0$}
    \FORALL{monomials $\gamma y_1^{\alpha_1} \cdots y_n^{\alpha_n}$
      \textbf{in} $f_j$}
    \STATE{$\alpha_0 := \round(p \log_{\varepsilon_*} |\gamma|) / p$
      \label{te:scaleks}
      \COMMENT{$\in \QQ$}}
    \STATE{$c := c+1$}
    \STATE{$\Sigma_c := \sgn \gamma$
      \COMMENT{$\in \{-1, 1\}$}}
    \STATE{$A_c := (\alpha_0, \alpha_1, \dots, \alpha_n)$
      \COMMENT{$\in \QQ \times \ZZ^n \subseteq \QQ^{n+1}$}}
    \ENDFOR
    \STATE{$B_j := \emptyset$}
    \FOR{$k := 1$ \TO $c$}
    \FOR{$\ell := k + 1$ \TO $c$}
    \IF{$\Sigma_k\Sigma_\ell < 0$}
    \STATE{$P := \{\langle A_k - A_\ell,  A_0 \rangle = 0\}$
      \COMMENT{$\langle A_k - A_\ell, A_0 \rangle \in \QQ[a_1, \dots, a_n]$}}
    \FOR{$m := 1$ \TO $c$}
    \STATE{$P := P \cup \{\langle A_m - A_k, A_0 \rangle \geq 0\}$
      \COMMENT{$\langle A_m - A_k, A_0 \rangle \in \QQ[a_1, \dots, a_n]$}}
    \ENDFOR
    \STATE{$B_j := B_j \cup \{ P \}$
      \COMMENT{set of sets of constraints}}
    \ENDIF
    \ENDFOR
    \ENDFOR
    \ENDFOR
    \STATE{$\Pi := \operatorname{DisjunctiveNormalForm}(\bigwedge_{j=1}^n \bigvee_{P \in B_j} \bigwedge
      P)$\label{te:dnf}}
    \RETURN{$\Pi$}
  \end{algorithmic}
\end{algorithm}
Algorithm~\ref{alg:tropicalc} explicitly uses, besides the parameters $k$ and
$J$ specified for $c$ in Algorithm~\ref{alg:scale}, also the right hand sides of
the input system \eqref{origeq} and the choice of $\varepsilon_*$. As yet another
parameter it takes the desired precision $p$ for rounding in
\eqref{rationalindex}. Notice that the use of this extra information is
compatible with the abstract scaling procedure in Sect.~\ref{se:scale}. Currying
\cite{CurryFeys:58} allows to use Algorithm~\ref{alg:tropicalc} in place of $c$
in a formally clean manner.

Similarly, Algorithm~\ref{alg:tropicald} takes parameters $\varepsilon_*$ and
$p$, while $d$ is specified in Algorithm~\ref{alg:scale} to have no parameters
at all. In l.\ref{td:te} we use Algorithm~\ref{alg:te} as a subalgorithm for
tropical equilibration. One obtains a disjunctive normal form $\Pi$, which
explicitly describes a set $\mP = \{\, p \in \QQ^n \mid \Pi(p)\,\}$ as a finite union of
convex polyhedra, as known from tropical geometry. Every
$(d_1, \dots, d_n) \in \mP$ satisfies \eqref{Teq}. The satisfiability condition in
l.\ref{td:sat} tests whether $\mP \neq \emptyset$. We employ \emph{Satisfiability Modulo
  Theories (SMT)} solving \cite{NieuwenhuisOliveras:06b} using the logic
\texttt{QF\_LRA} \cite{BarrettFontaine:17a} for quantifier-free linear real
arithmetic. The set $\mP$ can get empty, e.g, when all monomials on the right
hand side of some differential equation have the same sign. Such an exceptional
situation is signaled with a return value $\bot$ in l.\ref{td:bot}. In the regular
case $\mP \neq \emptyset$, the choice $(d_1, \dots, d_n)$ in l.\ref{td:choice} is provided by
the SMT solver. From a practical point of view, the disjunctive normal form
computation in Algorithm~\ref{alg:te} is a possible bottleneck and requires good
heuristic strategies \cite{Lueders:20a}.

With applications in the natural sciences one often wants to make in
l.\ref{td:choice} an adequate choice for $(d_1, \dots, d_n)$ lying in a specific
convex polyhedron $P \subseteq \mP$, which technically corresponds to one conjunction in
$\Pi$. Such choices are subtle and typically require human interaction. For
instance, when the chain of reduced dynamical systems ends with a steady state,
it is interesting to consider the polyhedron $P$ that is closest to that steady
state. Such strategies are not covered by our algorithms presented here.

At this stage we have obtained a scaled system as defined in
Sect.~\ref{se:scale}, including partitioning. The focus of the next section is
to utilize this scaling for analytically substantiated reductions.

\section{Singular Perturbation Methods}\label{se:spt}

The theory of singular perturbations is used to compute and justify
theoretically the limit of system \eqref{fullscalesimplifiedint} when
$\delta \to 0$. There are several types of results in this theory. The results of
Tikhonov, further improved by Hoppensteadt, show the convergence of the solution
of system \eqref{fullscalesimplifiedint} to the solution of a
differential-algebraic system in which the slowest variables $z_m$ follow
differential equations, and the remaining fast variables follow algebraic
equations \cite{tikh,hopp}. The results of Fenichel are known under the name of
{\em geometrical singular perturbations}. He showed that the algebraic equations
in Tikhonov's theory define a slow invariant manifold that is persistent for
$\delta > 0$ \cite{fenichel}. For geometrical singular perturbations,
differentiability in $\delta$ is needed in system \eqref{fullscalesimplifiedint}.

Samal et al.~have noted that Tikhonov's theorem is applicable to tropically
scaled systems \cite{sgfwr}. For instance, with $\delta_1 = \delta^{b_2}$, system
\eqref{fullscalesimplifiedint} may be rewritten as
\begin{equation}\label{eq:samal}
  z_1' = \wg_1(z, \delta_1),\quad z_2' = \delta_1 \wg_2(z, \delta_1),\quad \dots, \quad z_m' = \delta_1 \wg_m(z,
  \delta_1).
\end{equation}
However, this approach comes with certain limitations. To start with, it allows
only two time scales. Furthermore, in case $b_2>1$, there may be
differentiability issues with respect to $\delta_1$, and some care has to be taken
when one tries to apply to \eqref{eq:samal} also Fenichel's results
\cite{fenichel}. In this section, we are going to generalize geometrical
singular perturbations, and compute invariant manifolds and reduced models for
more than two time scales, introducing further $\delta_2$,
\dots,~$\delta_{\ell-1}$. Our generalization is based on a recent paper by Cardin and
Teixeira \cite{cartex}.

Section~\ref{se:extfen} presents relevant results from \cite{cartex} adapted to
our purposes here and applied to our system \eqref{fullscalesimplifiedint}. In
contrast to the original article, which is based on a series of hyperbolicity
conditions, we introduce the notion of hyperbolic attractivity, which is
stronger but still adequate for our purposes. In Sect.~\ref{se:ha} we describe
efficient algorithmic tests for hyperbolic attractivity. Section~\ref{se:diff}
gives sufficient algorithmic criteria addressing the above-mentioned
differentiability issues.

\subsection{Application of a Fenichel Theory for Multiple Time Scales}
\label{se:extfen}
We consider our system \eqref{fullscalesimplifiedint} over the positive first
orthant $\mU = (0, \infty)^n \subseteq \RR^n$. A recent paper by Cardin and Teixeira
\cite{cartex} generalizes Fenichel's theory to provide a solid foundation to
obtain more than one nontrivial invariant manifold. This allows, in particular,
the reduction of multi-time scale systems such as system
\eqref{fullscalesimplifiedint}. Technically, the approach considers a
multi-parameter system using time scale factors $\delta_1$, $\delta_1\delta_2$, \dots~instead of
increasing powers of one single $\delta$.

We let $\ell \in \{2, \dots, m\}$ and define
\begin{equation}\label{eq:defbeta}
  \beta_1 = b_2-b_1 = b_2,\quad \dots,\quad \beta_{\ell-1} = b_{\ell}-b_{\ell-1},
\end{equation}
and furthermore $\delta_1 = \delta^{\beta_1}$, \dots,
$\delta_{\ell-1} = \delta^{\beta_{\ell-1}}$, and $\wdelta = (\delta_1, \dots, \delta_{\ell-1})$.

These definitions allow us to express also all $\delta^{b_{k,j}'}$ occurring in
\eqref{fullscalesimplifiedint} as products of powers of $\delta_1$,
\dots,~$\delta_{\ell-1}$, with nonnegative but possibly non-integer rational exponents, via
expressing each $b_{k,j}'$ as a nonnegative rational linear combination of
$\beta_1$, \ldots,~$\beta_{\ell-1}$. This yields
\begin{equation}\label{eq:defwg}
  \wg_k(z, \delta_1, \dots, \delta_{\ell-1}) = \wf_k(z, \delta),\quad 1 \leq k \leq m.
\end{equation}
Moreover, we express
$\delta^{b_{\ell+1}}=\delta_1\cdots\delta_{\ell-1}\cdot\eta_{\ell+1}$, \dots,~$\delta^{b_m}=\delta_1\cdots\delta_{\ell-1}\cdot\eta_{m}$, via
\begin{equation}\label{eq:defeta}
  \eta_k(\delta_1, \dots, \delta_{\ell-1}) = \delta^{b_k-b_\ell},\quad \ell+1 \leq k \leq m,
\end{equation}
which is obtained by writing each $b_k - b_\ell$ as a nonnegative rational linear
combination of $\beta_1$, \ldots,~$\beta_{\ell-1}$. In these terms our system
\eqref{fullscalesimplifiedint} translates to
\begin{align*}
\label{fullsys}
  z_1'&= \wg_1(z,\wdelta)\\
  z_2'&=\delta_1 \wg_2(z, \wdelta)\\
      & \mathrel{\makebox[\widthof{=}]{\vdots}} \\
  z_\ell'&=\delta_1\cdots \delta_{\ell-1}\wg_\ell(z, \wdelta)\\
  z_{\ell+1}' &= \delta_1\cdots \delta_{\ell-1}\eta_{\ell+1}(\wdelta)\wg_{\ell+1}(z, \wdelta)\\
      & \mathrel{\makebox[\widthof{=}]{\vdots}}\\
  z_m' &= \delta_1\cdots \delta_{\ell-1}\eta_m(\wdelta)\wg_m(z, \wdelta).\numberthis
\end{align*}

In terms of the right hand sides of \eqref{fullsys} the application of relevant
results in \cite{cartex} requires that $\wg_1$, \dots,~$\wg_\ell$ and
$\eta_{\ell+1}\wg_{\ell+1}$, \dots,~$\eta_m\wg_m$ are smooth on an open neighborhood of
$\mU \times [0,\vartheta_1) \times \cdots \times [0,\vartheta_{\ell-1})$ with
$\vartheta_1>0$, \dots, $\vartheta_{\ell-1}>0$. We are going to tacitly assume such
smoothness here and address this issue from an algorithmic point of view in
Sect.~\ref{se:diff}.

We are now ready to transform our system into $\ell$ time scales as follows, where
possibly $\ell > 2$:
\[
  \tau_1 = \tau,\quad \tau_2 = \delta_1 \tau,\quad \dots,\quad \tau_\ell=\delta_1 \cdots \delta_{\ell-1} \tau.
\]
In time scale $\tau_k$, with $1\leq k\leq \ell$, system \eqref{fullsys} then becomes
\begin{align*}\label{dmone}
  \delta_1 \cdots \delta_{k-1} \diff{z_{1}}{\tau_k} &= \wg_1(z,\wdelta)\\
                                  & \mathrel{\makebox[\widthof{=}]{\vdots}} \\
  \delta_{k-1} \diff{z_{k-1}}{\tau_k} &= \wg_{k-1}(z,\wdelta)\\[1ex]
  \diff{z_k}{\tau_k} &= \wg_k(z,\wdelta)\\[1ex]
  \diff{z_{k+1}}{\tau_k} &= \delta_k\wg_{k+1}(z,\wdelta)\\
                                  & \mathrel{\makebox[\widthof{=}]{\vdots}} \\
  \diff{z_{\ell}}{\tau_k} &= \delta_k \cdots \delta_{\ell-1}\,\wg_\ell(z,\wdelta)\\[1ex]
  \diff{z_{\ell+1}}{\tau_k} &= {\delta_k \cdots \delta_{\ell-1}}\eta_{\ell+1}(\wdelta)\wg_{\ell+1}(z, \wdelta)\\
                                  & \mathrel{\makebox[\widthof{=}]{\vdots}} \\
  \diff{z_{m}}{\tau_k} &={\delta_k \cdots \delta_{\ell-1}} \eta_m(\wdelta)\wg_m(z, \wdelta).\numberthis
\end{align*}
For $k=1$ and $k=\ell$ we obtain empty products, which yield the neutral element
$1$, as usual.

Similarly to Sect.~\ref{se:scale}, we are interested in the asymptotic behavior
for $\wdelta \to 0$, which is approximated by the elimination of higher order
terms. We are now going to introduce a construction required for a justification
of this approximation, which also clarifies the greatest possible choice for
$\ell \leq m$ above. Define $F_0 = 0$ and
\begin{displaymath}
  Z_k = \pmat{z_1\\ \vdots\\ z_k},\quad
  F_k(z, \delta)=\begin{pmatrix}\wf_1(z, \delta)\\ \vdots \\ \wf_k(z, \delta)\end{pmatrix},
\quad 1\leq k\leq m.
\end{displaymath}
With system \eqref{fullscalesimplifiedint} in mind, we are going to use
$\wf_k(z, 0)$ in favor of $\wg_k(z, 0, \dots, 0)$. It is easy to see that both are
equal. We define furthermore
\begin{equation}
\label{submanifolds}
M_k = \bigl(F_k(z^*, 0)=0\bigr),\quad \mM_k = \left\{\, z^* \in \mU \mid F_k(z^*, 0)=0
  \,\right\},\quad 0 \leq k \leq m.
\end{equation}
The sets $\mM_k$ are obtained from varieties defined by the systems $M_k$ via
intersection with the first orthant. Furthermore,
\begin{equation}\label{eq:chain}
  \mU = \mM_0 \supseteq \mM_1 \supseteq \dots \supseteq \mM_m
\end{equation}
establishes a chain of nested subvarieties,
again intersected with the first orthant.

We define that $\mM_1$ is \emph{hyperbolically attractive} on $\mM_0$, if
$\mM_1 \neq \emptyset$, and for all $z \in \mM_1$ all eigenvalues of the Jacobian
$D_{z_1}\wf_1(z, 0)$ have negative real parts. Therefore $\mM_1$ is a manifold.
For $k \in \{ 2, \dots, m \}$, $\mM_{k}$ is hyperbolically attractive on
$\mM_{k-1}$, if $\mM_{k} \neq \emptyset$, and the following holds. Recall that using the
defining polynomials $F_{k-1}$ of $\mM_{k-1}$, the implicit function theorem
yields a unique local resolution of $Z_{k-1}$ as functions of $z_k$,
\ldots,~$z_m$, provided that $D_{z_k}F_{k-1}$ has no zero eigenvalues. We thus obtain
\[
  \wf_k(z,0)= \wf_k^*(z_k,\ldots,z_m,0) \quad\text{on}\quad \mM_{k-1}.
\]
For our definition we now require that for all $z \in \mM_{k}$ all eigenvalues of
$D_{z_k}\wf_k^*(z_k, \ldots, z_m, 0)$ have negative real parts. Again, $\mM_k$ is a
manifold. When $\mM_k$ is hyperbolically attractive on $\mM_{k-1}$ we write
$\mM_{k-1} \ha \mM_{k}$, where $Z_k$ will be clear from the context.

If we find for some $\ell \in \{1, \dots, m \}$ that $\mM_{0} \ha \mM_{1}$,
$\mM_{1} \ha \mM_{2}$, \dots, $\mM_{\ell-1} \ha \mM_{\ell}$, then we simply write
$\mM_0 \ha \dots \ha \mM_\ell$, and call this a \emph{hyperbolically attractive
  $\ell$-chain}. Such a chain is called \emph{maximal} if either $\ell = m$ or
$\mM_\ell \notha \mM_{\ell+1}$.

Let $\mM_0 \ha \dots \ha \mM_\ell$ be a hyperbolically attractive $\ell$-chain. Consider
for each $k \in \{1, \dots, \ell\}$ the following differential-algebraic system:
\begin{equation}\label{fullsyslimit}
0 = F_{k-1}(z,0), \quad
\diff{z_k}{\tau_k} = \wf_k(z,0),\quad
\diff{z_{k+1}}{\tau_k} = 0,\quad \dots,\quad \diff{z_m}{\tau_k} = 0.
\end{equation}
In the limiting case $\wdelta=0$, this corresponds to system \eqref{dmone}.
Recall that
\begin{displaymath}
  \tau_k = \delta_1 \cdots \delta_{k-1} \tau = \delta^{b_2-b_1} \cdots \delta^{b_{k}-b_{k-1}} \tau = \delta^{b_k-b_1} \tau
  = \delta^{b_k} \tau,
\end{displaymath}
and equivalently rewrite \eqref{fullsyslimit} as a triplet $(M_{k-1}, T_k, R_k)$
with entries as follows:
\begin{equation}\label{fullsyslimitdelta}
F_{k-1}(z,0) = 0, \quad
\diff{z_k}{\tau} = \delta^{b_k}\wf_k(z,0),\quad
\diff{z_{k+1}}{\tau} = \dots = \diff{z_m}{\tau} = 0.
\end{equation}
For a given index $k$, we call $(M_{k-1}, T_k, R_k)$ a \emph{reduced system} on
$\mM_{k-1}$, where the relevant hyperbolic attractivity relation is
$\mM_{k-1} \ha \mM_{k}$. In order to indicate the relevance of
$\mM_0 \ha \dots \ha \mM_\ell$ we write
$(M_0, T_1, R_1) \ha \dots \ha (M_{\ell-1}, T_\ell, R_\ell)$ also for reduced systems, where
$\mM_\ell$ is not made explicit but relevant for the last triplet. Slightly abusing
language, we speak of a hyperbolically attractive $\ell$-chain of reduced systems,
which is maximal if $\mM_0 \ha \dots \ha \mM_\ell$ is.

The following theorem is a consequence of \cite[Theorem~A and
Corollary~A]{cartex}, specialized to the situation at hand.

\begin{theorem}\label{th:ct}
  Let $\ell \geq 2$. Assume that
  $(M_0, T_1, R_1) \ha \dots \ha (M_{\ell-1}, T_\ell, R_\ell)$ is a hyperbolically attractive
  $\ell$-chain of reduced systems for system \eqref{fullsys}. Let
  $K \subseteq \mU$ be compact. Then for sufficiently small $\wdelta$ and all
  $k \in \{ 1, \dots, \ell \}$, system~\eqref{fullsys} admits invariant manifolds
  $\mN_{k-1}$ that depend on $\wdelta$ and are
  $(\delta_1 + \cdots + \delta_{k-1})$-close to $\mM_{k-1} \cap K$ with respect to the Hausdorff
  distance. Moreover, there exists $T > 0$ such that solutions of
  system~\eqref{fullsys} on $\mN_{k-1}$ in time scale $\tau_{k}$ converge to
  solutions of $(M_{k-1}, T_k, R_k)$, uniformly on any closed subinterval of
  $(0,\,T)$, as $\wdelta \to 0$.\qed
\end{theorem}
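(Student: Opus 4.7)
My plan is to reduce Theorem~\ref{th:ct} to a direct application of Theorem~A and Corollary~A of Cardin and Teixeira \cite{cartex}, by verifying that the present hypotheses imply theirs under the multi-parameter reformulation \eqref{fullsys}. The bulk of the work is setting up the dictionary between our framework and the framework in \cite{cartex} and checking that hyperbolic attractivity is a stronger condition than the hyperbolicity used there.

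First I would fix, for each $k \in \{1,\dots,\ell\}$, the triplet $(M_{k-1},T_k,R_k)$ as in \eqref{fullsyslimitdelta} and view system~\eqref{fullsys} as the multi-parameter perturbation in $\wdelta = (\delta_1,\dots,\delta_{\ell-1})$ treated by \cite{cartex}. The required smoothness of $\wg_1,\dots,\wg_\ell$ and of $\eta_{\ell+1}\wg_{\ell+1},\dots,\eta_m\wg_m$ on a neighborhood of $\mU \times [0,\vartheta_1) \times \cdots \times [0,\vartheta_{\ell-1})$ is assumed in the statement (and addressed algorithmically in Sect.~\ref{se:diff}), so the analytic prerequisites of \cite[Thm.~A]{cartex} are met.

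Second I would translate our attractivity hypothesis into their hyperbolicity hypothesis. Inductively on $k$, from $\mM_{k-2} \ha \mM_{k-1}$ all eigenvalues of $D_{z_{k-1}}\wf_{k-1}^*$ restricted to $\mM_{k-1}$ have strictly negative real parts, so in particular are nonzero, so $D_{z_{k-1}}F_{k-1}$ has only nonzero eigenvalues on $\mM_{k-1}$ and the implicit function theorem yields the local resolution of $Z_{k-1}$ in terms of $(z_k,\dots,z_m)$ needed to define $\wf_k^*$. Then the next link $\mM_{k-1} \ha \mM_k$ tells us the eigenvalues of $D_{z_k}\wf_k^*(z_k,\dots,z_m,0)$ on $\mM_k$ have strictly negative real parts. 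Since \cite{cartex} only requires the eigenvalues to be off the imaginary axis, our condition is strictly stronger; attractivity moreover guarantees that the perturbed manifolds $\mN_{k-1}$ constructed there are locally attracting, although the theorem only asserts existence and Hausdorff closeness.

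Third, having verified the hypotheses, \cite[Thm.~A]{cartex} supplies, for each $k \in \{1,\dots,\ell\}$ and for $\wdelta$ sufficiently small, an invariant manifold $\mN_{k-1}$ of system~\eqref{fullsys} whose Hausdorff distance to $\mM_{k-1} \cap K$ is bounded by a constant multiple of $\delta_1 + \cdots + \delta_{k-1}$ (this is exactly their distance estimate after rewriting their ``small parameter'' sum in our variables $\delta_j$), and \cite[Cor.~A]{cartex} supplies $T>0$ and the asserted uniform convergence of solutions on $\mN_{k-1}$, expressed in time scale $\tau_k = \delta^{b_k}\tau$, to solutions of the reduced differential-algebraic system \eqref{fullsyslimit}, which is precisely $(M_{k-1},T_k,R_k)$.

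The main obstacle is bookkeeping: one has to verify that the exponents arising from expressing $b_{k,j}'$ and $b_k - b_\ell$ as nonnegative rational combinations of $\beta_1,\dots,\beta_{\ell-1}$ give right-hand sides that are genuinely smooth (not merely continuous) in $\wdelta$ at $\wdelta=0$, so that $D_{z_k}\wg_k$ agrees with $D_{z_k}\wf_k$ at $\wdelta=0$ and the limiting Jacobians used to test hyperbolic attractivity are the same as those appearing in the Cardin--Teixeira hypotheses. This is precisely the differentiability issue deferred to Sect.~\ref{se:diff}; granting it, everything else is a direct invocation of \cite{cartex}.
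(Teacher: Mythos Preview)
Your proposal is correct and takes essentially the same approach as the paper: the paper does not give a detailed proof at all but simply states that the result ``is a consequence of \cite[Theorem~A and Corollary~A]{cartex}, specialized to the situation at hand'' and ends with a \qed. Your write-up is in fact more explicit than the paper's treatment, spelling out why hyperbolic attractivity implies the (weaker) hyperbolicity hypothesis of \cite{cartex} and flagging the smoothness bookkeeping that the paper defers to Sect.~\ref{se:diff}; the only minor slip is that the Jacobian needed for the implicit function theorem is $D_{Z_{k-1}}F_{k-1}$ (with respect to the full block $Z_{k-1}$), not $D_{z_{k-1}}F_{k-1}$.
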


For $k \in \{1, \ldots, \ell\}$, the $\mM_{k-1}$ are critical manifolds, which contain
only stationary points. The systems $(T_k, R_k)$ of ordinary differential
equations on $\mM_{k-1}$ approximate invariant manifolds $\mN_{k-1}$ in the
sense of the theorem. They furthermore approximate solutions in time scale
$\tau_k$ of system~\eqref{fullsys}, which is equivalent to our system
\eqref{fullscalesimplifiedint}. In other words,
system~\eqref{fullscalesimplifiedint} admits a succession of invariant
manifolds, on which the behavior in the appropriate time scale is approximated
by the respective reduced equations \eqref{fullsyslimit} and, equivalently,
\eqref{fullsyslimitdelta}. Note that only the $\delta^{b_k}\wf_k(z,0)$ without the
higher order terms enter the reduced systems $(M_{k-1}, T_k, R_k)$.

Algorithm~\ref{alg:cls} 
\begin{algorithm}[t]
  \renewcommand{\algorithmiccomment}[1]{%
    \hfill\parbox{\widthof{$= M_0 \circ  [F = 0] \circ [f=0]$}}{\textcolor{gray}{#1}}}
  \caption{\label{alg:cls}$\operatorname{ComputeReducedSystems}$}
  \begin{algorithmic}[1]
    \REQUIRE Output of Algorithm~\ref{alg:scale}:
    \begin{enumerate}[nolistsep]
    \item $[T_1, \dots, T_m]$, a list of lists $z_k' = \delta^{b_k}f_k$;
    \item $[P_1, \dots, P_m]$, a list of lists of polynomials in $\QQ[x_1, \dots, x_n][\delta]$;
    \end{enumerate}

    We denote $\xi_k := |T_k|$, $\Xi_k := \sum_{i=1}^{k}\xi_i$, and
    $X=\{x_1, \dots, x_n\}$.\smallskip
    
    \ENSURE A list $[(M_0, T_1, R_1), \dots, (M_{\ell-1}, T_\ell, R_\ell)]$ of triplets where
    $\ell \in \{2, \dots, m\}$, or the empty list. For
    $k \in \{1, \dots, \ell\}$, $M_{k-1}$ is a list of real constraints defining
    $\mM_{k-1} \subseteq \RR^n$; $T_k$ is a list of differential equations;
    $R_k$ is a list of trivial differential equations $x'=0$ for all
    differential variables from $T_{k+1}$, \dots,~$T_m$. \smallskip

    The triplets $(M_{k-1}, T_{k}, R_{k})$ represent reduced systems
    according to \eqref{fullsyslimitdelta}.
    \smallskip

    \STATE{\label{cm:mzero}$U := [x_1>0, \dots, x_n>0]$}
    \STATE{$M_0, Z, F := [\,]$}
    \STATE{$A := \pmat{~}$}
    \FOR{\label{cm:loop}$k:=1$ \TO $m$}
    \STATE{$z := [\,x \mid x'=\delta^{b_k}g \in T_k\,]$
      \COMMENT{$\subseteq X$, $|z| = \xi_k$}\label{l:xk}}
    \STATE{$f := [\,g \mid x' = \delta^{b_k}g \in T_k\,]$
      \COMMENT{$= \hat{f}_k(z,0) \in \QQ[X]^{\xi_k}$}\label{l:g}}
    \STATE{$M_{k} := M_{k-1} \circ [f = 0]$
      \COMMENT{$= M_0 \circ [F = 0] \circ [f = 0]$}}
    \STATE{\label{cm:hha}$\varphi, A :=
      \operatorname{IsHyperbolicallyAttractive}(U \circ M_{k}, Z, z, F, f, k, A)$}
    \IF{\NOT $\varphi$}
    \STATE{\textbf{break}\label{cm:break}}
    \ENDIF
    \STATE{$R_k := [\,x' = 0 \mid x' = h \in T_{k+1} \cup \dots \cup T_m\,]$
      \COMMENT{$\Xi_{k-1} + \xi_k + |R_k| = n$}}
    \STATE{$Z := Z \circ z$
      \COMMENT{$\subseteq X$, $|Z| = \Xi_k$}}
    \STATE{$F := F \circ f$
      \COMMENT{$\in \QQ[X]^{\Xi_k}$}}
    \ENDFOR\label{cm:endloop}
    \STATE{\textit{\# We either broke in line \ref{cm:break} preserving $k$, or we have $k =
        m + 1$.}}
    \STATE{$\ell := k - 1$\label{cm:defl}}
    \IF{\label{cm:lstart}$\ell < 2$}
    \RETURN{$()$\label{cm:smalll}}
    \ENDIF\label{cm:lend}
    \IF{\label{cm:sdc}$\operatorname{TestSmoothness}([T_1, \dots, T_m], [P_1, \dots,
      P_m], \ell)$ = \text{failed}}
    \PRINT{``Warning: differentiability requires further verification''}
    \ENDIF
    \RETURN{$[(M_0, T_1, R_1)$, \dots, $(M_{\ell-1}, T_\ell, R_\ell)]$}
  \end{algorithmic}
\end{algorithm}
now starts with the output $[T_1, \dots, T_m]$ of Algorithm~\ref{alg:scale}, which
represents the scaled system \eqref{eq:scaleandtruncate}. Notice that each $T_k$
already meets the specification in \eqref{fullsyslimitdelta}. In
l.\ref{cm:mzero} we define $U$ to contain defining inequalities of the first
orthant $\mU$. Starting with $k=1$, the for-loop in
l.\ref{cm:loop}--\ref{cm:endloop} successively constructs $M_k$ and $R_k$ such
that in combination with $T_k$ from the input, $(M_{k-1}, T_k, R_k)$ forms a
reduced system as in \eqref{fullsyslimitdelta}. The loop stops when either
$k=m+1$ or a test for hyperbolic attractivity in l.\ref{cm:hha} finds that
$\mM_{k-1} \notha \mM_k$. We are going to discuss this test in detail in the
next section. Note that we maintain a matrix $A$ for storing information between
the subsequent calls of our test. In either case we arrive at a maximal
hyperbolically attractive $(k-1)$-chain of reduced systems given as a list
$[(M_{0}, T_1, R_1), \dots, (M_{k-2}, T_{k-1}, R_{k-1})]$. Following the notational
convention used throughout this section we set $\ell$ to $k-1$ in l.\ref{cm:defl}.
The test in l.\ref{cm:lstart}--\ref{cm:lend} reflects the choice of
$\ell \in \{2, \dots, m\}$ at the beginning of this section. Finally, l.\ref{cm:sdc} uses
the second input $[P_1, \dots, P_m]$ of the algorithm to address the smoothness
requirements for system \eqref{fullsys}. We are going to discuss the
corresponding procedure in detail in Sect.~\ref{se:diff}. It will turn out that
this procedure provides only a sufficient test. Therefore we issue in case of
failure only a warning, allowing the user to verify smoothness a posteriori,
using alternative algorithms or human intelligence. One might mention that it is
actually sufficient to consider weaker, finite differentiability conditions
instead of smoothness, which can be seen by inspection of the proofs in
\cite{cartex}.

From an application point of view, attracting invariant manifolds are relevant
in the context of biological networks, and our notion of hyperbolic attractivity
holds for large classes of such networks \cite{Feinberg:19a}. This is our
principal motivation for using hyperbolic attractivity here. From a
computational perspective, hyperbolic attractivity can be tested based on
Hurwitz criteria, as we are going to make explicit in the next section.

The relevant results in \cite{cartex}, in contrast, are based on a series of
hyperbolicity conditions, which are somewhat weaker than hyperbolic
attractivity. Hyperbolicity can be tested algorithmically as well, albeit with
more effort. For approaches based on Routh's work see, e.g., \cite[Chapter V, \S
4]{gantmacher}, which checks the number of purely imaginary eigenvalues of a
real polynomial via the Cauchy index of a related rational function.

\subsection{Verification of Hyperbolic Attractivity}\label{se:ha}
Our definition of hyperbolic attractivity $\mM_{k-1} \ha \mM_k$ refers to the
eigenvalues of the Jacobians of the $\wf_k^*$, which cannot be directly obtained
from the Jacobians of the $\wf_k$ \cite{cartex,cartexcorr}. Generalizing work on
systems with three time scales \cite{krwa}, we take in this section a linear
algebra approach to obtain the relevant eigenvalues without computing the
$\wf_k^*$.

To start with, recall the well-known \emph{Hurwitz criterion}
\cite{Hurwitz:95a}:

\begin{theorem}[Hurwitz, 1895]\label{th:hurwitz}
  Consider $f = a_0x^n+a_1x^{n-1}+\dots+a_n \in \RR[x]$, $a_0 > 0$. For
  $i \in \{1, \dots, n\}$ define
  \begin{displaymath}
    H_i = \begin{pmatrix}
      a_1 & a_3 & a_5 & \dots & a_{2i-1}\\
      a_0 & a_2 & a_4 & \dots & a_{2i-2}\\
      0   & a_1 & a_3 & \dots & a_{2i-3}\\
      \hdotsfor{5}\\
      0   & \hdotsfor{3} & a_i
    \end{pmatrix},\quad
    \Delta_i = \det H_i.
  \end{displaymath}
  Then all complex zeros of $f$ have negative real parts if and only if
  $\Delta_1 > 0$, \dots,~$\Delta_n > 0$. Notice that
  $\Delta_n = a_n\Delta_{n-1}$, and therefore $\Delta_n > 0$ can be equivalently replaced with
  $a_n > 0$.
\end{theorem}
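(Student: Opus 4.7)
The plan is to use the argument principle together with the Hermite--Biehler characterization of Hurwitz stability. Write $f(i\omega) = u(\omega) + i\,v(\omega)$ where $u$ collects the monomials $a_k (i\omega)^{n-k}$ with real value and $v$ those with imaginary value. Since $a_0 > 0$, the polynomial $f$ has all its zeros in the open left half-plane if and only if the curve $\omega \mapsto f(i\omega)$ winds $n/2$ times around the origin as $\omega$ traverses $\RR$; equivalently, by Hermite--Biehler, $u$ and $v$ have only real, simple, strictly interlacing zeros, and the Wronskian $uv' - u'v$ is positive on $\RR$.

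The next step is to translate this interlacing/positivity condition into an algebraic condition on the coefficients. I would run the Euclidean algorithm on the pair $(u, v)$ (effectively, a Sturm-like sequence), producing polynomials of strictly decreasing degree. A classical lemma says that the interlacing-with-positive-Wronskian condition is equivalent to all successive leading coefficients produced by this Euclidean process being strictly positive. One then identifies these leading coefficients, via elementary row and column operations applied to the Hurwitz matrix $H_n$, as the successive pivots of Gaussian elimination on $H_n$.

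The main obstacle is the combinatorial linear-algebra bookkeeping in this last step: matching the leading coefficients of the Sturm-like remainder sequence with leading principal minors of $H_n$. The cleanest route is via Schur complements: the pivot produced at step $i$ of Gaussian elimination on $H_n$ can be expressed by the Schur complement identity as $\Delta_i / \Delta_{i-1}$, with $\Delta_0 := 1$. Thus positivity of all pivots is equivalent to $\Delta_1 > 0, \ldots, \Delta_n > 0$, closing the equivalence with Hermite--Biehler.

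Finally, the remark $\Delta_n = a_n \Delta_{n-1}$ is immediate: in $H_n$, coefficients $a_j$ with $j > n$ are taken to be $0$, so the last column is $(0, \ldots, 0, a_n)^\top$; expanding the determinant along this column gives $\Delta_n = a_n \Delta_{n-1}$. Hence, given that $\Delta_1, \ldots, \Delta_{n-1} > 0$, the condition $\Delta_n > 0$ can be replaced by $a_n > 0$, which is also consistent with the fact that $a_n/a_0 = (-1)^n \prod_i \lambda_i$ must have the sign of $(-1)^n$ times a product of roots with negative real parts, hence be positive.
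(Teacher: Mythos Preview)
The paper does not prove this theorem. It is stated as the classical Hurwitz criterion, attributed to Hurwitz (1895) via the citation \cite{Hurwitz:95a}, and used as a black box throughout Sect.~\ref{se:ha}. There is therefore no ``paper's own proof'' to compare your proposal against.

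That said, your outline is one of the standard routes to the Hurwitz criterion: argument principle $\Rightarrow$ Hermite--Biehler interlacing $\Rightarrow$ positivity of the leading coefficients in a Sturm/Euclidean remainder sequence $\Rightarrow$ positivity of the leading principal minors of $H_n$ via the Schur-complement/pivot identification $\Delta_i/\Delta_{i-1}$. The architecture is correct, and the final remark on $\Delta_n = a_n\Delta_{n-1}$ is exactly right. The weakest link in your write-up is the step you yourself flag: identifying the Euclidean remainder leading coefficients with the Gaussian pivots of $H_n$. This is where the real work lies, and ``elementary row and column operations'' hides a genuinely delicate index-chasing argument (the standard treatment runs through the Routh array, whose rows are precisely those remainder coefficients and whose first-column entries are the pivots). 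If you were to submit this as a proof rather than a plan, that step would need to be spelled out; as a plan it is sound.
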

We call $H_n$ the \emph{Hurwitz matrix} and $\Delta_i$ the \emph{$i$-th Hurwitz
  determinant} of $f$. Furthermore, we refer to
$\Gamma= (\Delta_1 > 0 \land \dots \land\Delta_{n-1} > 0 \land a_n > 0)$ as the \emph{Hurwitz conditions} for
$f$.

Our first result generalizes \cite[Proposition~1\,(ii)]{krwa}. The proof is
straightforward by induction using the argument in \cite[Lemma~3]{krwa} and its
proof.
\begin{lemma}\label{le:haone}
  For $k \in \{1, \dots, m\}$ define
  \begin{displaymath}
    J_k = \diag{1}{\varrho_1\cdots\varrho_{k-1}} \cdot D_{Z_k}F_k(z,0)
    = \pmatr{D_{z_1}\wf_1(z,0)&\dots & D_{z_k}\wf_1(z,0)\\
    \varrho_1 D_{z_1}\wf_2(z,0)&\dots & \varrho_1 D_{z_k}\wf_2(z,0)\\
    \hdotsfor{3} \\
    \varrho_1\cdots\varrho_{k-1} D_{z_1}\wf_k(z,0)& \dots & \varrho_1\cdots\varrho_{k-1} D_{z_k}\wf_k(z,0)}.
  \end{displaymath}
  Let $\ell \in \{1, \dots, m\}$. Then $\mM_0 \ha \dots \ha \mM_{\ell}$ if and only if
  $\mM_{\ell} \neq \emptyset$ and for all $k \in \{1, \dots, \ell\}$, all sufficiently small
  $\varrho_1^*>0$, \ldots,~$\varrho_{k-1}^*>0$, and all $z^*\in \mM_{k}$, all eigenvalues of
  $J_k(\varrho_1^*, \dots, \varrho_{k-1}^*, z^*)$ have negative real parts.
  
  In particular, one can choose
  $\varrho_1^* = \dots = \varrho_{k-1}^* = \varrho^*$ with sufficiently small
  $\varrho^*$ and consider
  $J_k' = \operatorname{diag}(1, \dots, \varrho^{k-1}) \cdot D_{Z_k}F_k(z, 0)$.
\end{lemma}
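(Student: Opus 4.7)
I would prove this by induction on $\ell$. The base case $\ell = 1$ is immediate since $J_1 = D_{z_1}\wf_1(z,0)$ carries no $\varrho$'s, and the stated eigenvalue condition coincides with the very definition of $\mM_0 \ha \mM_1$.

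For the inductive step, the structural observation is that $J_k = \Lambda_k \cdot D_{Z_k}F_k(z,0)$ where $\Lambda_k = \operatorname{diag}(I, \varrho_1 I, \ldots, \varrho_1 \cdots \varrho_{k-1} I)$ with identity blocks matching the partitioning of $Z_k$. Writing $J_\ell$ as a $2 \times 2$ block matrix with upper-left block $J_{\ell-1}$, upper-right block $\Lambda_{\ell-1}\cdot D_{z_\ell}F_{\ell-1}$, and lower row scaled by the common factor $\varrho_1 \cdots \varrho_{\ell-1}$, the $\Lambda_{\ell-1}$ cancels against $J_{\ell-1}^{-1}=(D_{Z_{\ell-1}}F_{\ell-1})^{-1}\Lambda_{\ell-1}^{-1}$ when one forms the Schur complement of $J_{\ell-1}$ in $J_\ell$. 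The result simplifies to
\[
  \varrho_1 \cdots \varrho_{\ell-1}\bigl(D_{z_\ell}\wf_\ell - D_{Z_{\ell-1}}\wf_\ell \cdot (D_{Z_{\ell-1}}F_{\ell-1})^{-1}\cdot D_{z_\ell}F_{\ell-1}\bigr),
\]
and the implicit function theorem applied to $F_{\ell-1}(z,0)=0$ identifies the parenthesized expression, at any $z^*\in\mM_{\ell-1}$, with the Jacobian $D_{z_\ell}\wf_\ell^*$ appearing in the definition of $\mM_{\ell-1}\ha\mM_\ell$. Combined with the standard block-matrix perturbation principle --- for $\bigl(\begin{smallmatrix} A & B \\ \varepsilon C & \varepsilon D\end{smallmatrix}\bigr)$ with $A$ invertible, the spectrum as $\varepsilon \to 0^+$ splits into eigenvalues converging to those of $A$ together with eigenvalues of order $\varepsilon$ whose leading coefficients are those of the Schur complement --- the two groups of eigenvalues of $J_\ell$ at $z^*\in\mM_\ell$ have negative real parts precisely when the induction hypothesis holds on $\mM_{\ell-1}$ and $\mM_{\ell-1}\ha\mM_\ell$ holds, giving exactly the equivalence we want.

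\textbf{Main obstacle.} The delicate point is making the iterated perturbation uniform. Since $J_{\ell-1}$ itself depends on $\varrho_1^*,\ldots,\varrho_{\ell-2}^*$ and on $z^*$, one has to verify that the smallness threshold for $\varrho_{\ell-1}^*$ in the splitting argument can be chosen compatibly with the thresholds inherited from the induction hypothesis, and uniformly in $z^*$ over the relevant portion of $\mM_\ell$. Since the lemma only asserts existence of such thresholds (the ``sufficiently small'' clause), this reduces to a continuity-of-spectrum argument and is further simplified by the common-parameter specialization $\varrho_1^* = \cdots = \varrho_{k-1}^* = \varrho^*$ noted at the end of the statement, reducing the problem to a single-parameter perturbation in $\varrho^*$. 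Invertibility of $D_{Z_{\ell-1}}F_{\ell-1}$, needed for both the Schur complement and the implicit function theorem, comes for free: the induction hypothesis forces $J_{\ell-1}$ nonsingular (no zero eigenvalue), and with $\varrho_i^*>0$ the diagonal factor $\Lambda_{\ell-1}$ is nonsingular as well.
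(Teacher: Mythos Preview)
Your proposal is correct and matches the paper's approach: the paper states that the proof is ``straightforward by induction using the argument in \cite[Lemma~3]{krwa} and its proof,'' and that Lemma~3 is precisely the block-matrix perturbation principle you invoke, yielding the Schur-complement identification with $D_{z_\ell}\wf_\ell^*$ via the implicit function theorem. Your discussion of the uniformity issue and the common-parameter specialization $\varrho_1^*=\cdots=\varrho_{k-1}^*=\varrho^*$ also aligns with how the paper handles the analogous step in the proof of Proposition~\ref{pr:hyper}.
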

Let $\Gamma_k$ denote the Hurwitz conditions for the characteristic polynomial of
$J_k'$. Then Lemma~\ref{le:haone} allows to state hyperbolic attractivity
$\mM_0 \ha \dots \ha \mM_\ell$ as a first-order formula over the reals as follows:
\begin{equation}\label{eq:foea}
  \textstyle\biggl(\exists(0<z): F_\ell(z, 0) = 0\biggr) \land
  \biggl(\bigwedge\limits_{k=1}^\ell \exists(0 < \sigma) \forall(0 < \varrho < \sigma) 
    \forall(0 < z): F_k(z, 0) = 0 \Rightarrow \Gamma_k(\varrho, z)\biggr).
\end{equation}

On these grounds, any real decision procedure
\cite{Tarski:48a,Collins:75,Weispfenning:97b} provides an effective test for
hyperbolic attractivity. However, our formulation \eqref{eq:foea} uses a
quantifier alternation $\exists\sigma\forall\varrho$ in its second part. We would like to use this in
order to use SMT solving over a quantifier-free logic. Our next result allows a
suitable first-order formulation without quantifier alternation. Its proof
combines \cite[Lemma~3]{krwa} with our Lemma~\ref{le:haone}.

\begin{proposition}[Effective Characterization of Hyperbolically
  Attractive $\ell$-Chains]\label{pr:hyper}\sloppy
  Define $A_1 = D_{z_{1}}\wf_{1}(z,0)$. For $k \in \{2, \dots, m\}$ define
  \begin{displaymath}
    \pmatl{A_{k-1} & B_{k}\\ C_{k} & V_{k}} =
    \pmatl{D_{Z_{k-1}}F_{k-1}(z,0) & D_{z_{k}}F_{k-1}(z,0)\\
      D_{Z_{k-1}}\wf_{k}(z,0) & D_{z_{k}}\wf_{k}(z,0)},
  \end{displaymath}
  and note that $\spmatl{A_{k-1} & B_{k}\\ C_{k} & V_{k}} = A_k$. Let
  $\ell \in \{1, \dots, m\}$. Then $\mM_0 \ha \dots \ha \mM_\ell$ if and only if
  \begin{enumerate}[(i)]
  \item $\mM_\ell \ne \emptyset$,
  \item for all $z^* \in \mM_{1}$ all eigenvalues of $W_1(z^*)$, where
    $W_1 = A_1$, have negative real parts,
  \item for all $k \in \{2, \dots, \ell\}$ and all $z^* \in \mM_{k}$,
    $A_{k-1}(z^*)$ is regular and all eigenvalues of $W_k(z^*)$, where
    $W_k = V_{k}-C_{k}A_{k-1}^{-1}B_{k}$, have negative real parts.
  \end{enumerate}
\end{proposition}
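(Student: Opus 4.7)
The plan is to deduce Proposition~\ref{pr:hyper} from Lemma~\ref{le:haone}, which reformulates hyperbolic attractivity of an $\ell$-chain via the eigenvalues of the scaled Jacobian $J_k'(\varrho, z) = \operatorname{diag}(I_{n_1}, \varrho I_{n_2}, \dots, \varrho^{k-1} I_{n_k}) \cdot D_{Z_k} F_k(z, 0)$ for sufficiently small $\varrho > 0$. The bridge to the Schur complement formulation is an asymptotic analysis of the eigenvalues of $J_k'$ as $\varrho \to 0^+$.

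First I would establish the Schur complement identity for the Jacobian appearing in the definition of $\ha$. Provided $A_{k-1}$ is invertible, the implicit function theorem applied to $F_{k-1}(Z_{k-1}, z_k, \dots, z_m, 0) = 0$ yields a local resolution $Z_{k-1}(z_k, \dots, z_m)$ with $D_{z_k} Z_{k-1} = -A_{k-1}^{-1} B_k$; the chain rule then gives $D_{z_k} \wf_k^* = V_k + C_k D_{z_k} Z_{k-1} = V_k - C_k A_{k-1}^{-1} B_k = W_k$. Thus the Jacobian used in the original definition of $\mM_{k-1} \ha \mM_k$ is precisely the Schur complement $W_k$, and their spectra agree.

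The second step is the eigenvalue splitting for $J_k'$. By induction on $k$, with $k = 1$ being immediate since $J_1' = A_1 = W_1$, I would decompose $J_k'$ into blocks with $J_{k-1}'$ in the top-left and a last row-block carrying an extra factor of $\varrho$. Expanding $\det(\lambda I - J_k')$ by a Schur complement in the top-left block and then rescaling $\lambda = \varrho^{k-1} \mu$ shows that as $\varrho \to 0^+$ the characteristic polynomial factors to leading order as the characteristic polynomial of $J_{k-1}'$ times a $\varrho^{(k-1)n_k}$ multiple of $\det(\mu I - W_k)$. Continuity of roots of a polynomial in its coefficients then shows that the $n_1 + \dots + n_k$ eigenvalues of $J_k'$ split into $k$ groups of sizes $n_1, \dots, n_k$, where the $j$-th group, rescaled by $\varrho^{-(j-1)}$, approaches the spectrum of $W_j$ as $\varrho \to 0^+$. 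Combining this with Lemma~\ref{le:haone} yields that all eigenvalues of $J_k'(\varrho, z^*)$ have negative real parts for sufficiently small $\varrho > 0$ if and only if each $W_j(z^*)$, $1 \leq j \leq k$, has only eigenvalues with negative real parts. The regularity of $A_{k-1}$ needed to define $W_k$ is itself a consequence of conditions (ii)--(iii): iterating the Schur complement determinant identity gives $\det A_{k-1} = \det W_1 \cdot \det W_2 \cdots \det W_{k-1}$, and every factor is nonzero as soon as its spectrum lies in the open left half plane.

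The main obstacle is the asymptotic eigenvalue splitting in the second step. The heuristic via iterated block triangularization at nested scales $1, \varrho, \varrho^2, \dots$ is clean, but a rigorous version must track several $\varrho$-scales simultaneously, apply continuity of polynomial roots uniformly in $z^* \in \mM_k$ over compact subsets, and package the induction carefully. This is exactly the point at which the argument of \cite[Lemma~3]{krwa} enters and has to be lifted from three time scales to arbitrary $\ell$.
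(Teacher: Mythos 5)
The gap is in your second step. The splitting argument via continuity of roots gives only one half of the equivalence you claim: if every $W_j(z^*)$ is Hurwitz, then for small $\varrho$ each rescaled root group of $J_k'$ stays strictly in the left half plane, so $J_k'$ is Hurwitz. The converse half is exactly what your plan needs for the forward direction of the proposition (chain $\Rightarrow$ via Lemma~\ref{le:haone} $J_k'$ Hurwitz for all small $\varrho$ $\Rightarrow$ $W_k$ Hurwitz), and it does not follow from root continuity: in the limit $\varrho \to 0^+$ strict negativity degenerates to $\re \lambda \le 0$, and higher-order corrections in $\varrho$ can push all eigenvalues of $J_k'$ strictly to the left even when $W_k$ has spectrum on the imaginary axis. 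As a statement about block matrices of this shape the converse is in fact false: take $A_1=(-1)$, $B_2=(0 \ \ 1)$, $C_2=\spmat{1\\0}$, $V_2=\spmat{0&-3\\1&0}$; then $W_2=V_2-C_2A_1^{-1}B_2=\spmat{0&-2\\1&0}$ has eigenvalues $\pm i\sqrt{2}$, while $\spmat{A_1&B_2\\ \varrho C_2&\varrho V_2}$ has characteristic polynomial $\lambda^3+\lambda^2+3\varrho^2\lambda+2\varrho^2$ and is Hurwitz for every $\varrho>0$. So no amount of uniformity over compact sets or careful packaging of the induction — the obstacles you name in your last paragraph — can close this step; it needs input beyond the asymptotic splitting. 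The paper does not obtain it from a limit argument either: both directions of its proof are delegated to the two conditions of \cite[Lemma~3]{krwa}, applied inductively with the diagonal scalings $\Rho$ and the choice $\varrho^*=\min\{\sigma^*,\sqrt[k-1]{\varepsilon}\}$, and the chain-rule identity you state in your first step never appears there.

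That first step of yours is, however, where a correct (and shorter) proof lives. Since $D_{z_k}\wf_k^{*}=W_k$ wherever $A_{k-1}$ is invertible, conditions (ii)--(iii) are essentially a restatement of the definition of $\mM_{k-1}\ha\mM_k$: in the forward direction, hyperbolic attractivity at level $k$ is defined through the local resolution, which requires $A_{k-1}$ to be regular near points of $\mM_k$ (alternatively, your identity $\det A_{k-1}=\det W_1\cdots\det W_{k-1}$, applied inductively at $z^*\in\mM_k\subseteq\mM_j$, yields the regularity), and it demands precisely that $D_{z_k}\wf_k^{*}=W_k$ be Hurwitz on $\mM_k$; in the backward direction, (iii) supplies the regularity needed for the implicit function theorem at points of $\mM_k$, and then (ii)--(iii) give back the definition verbatim, with (i) handling nonemptiness. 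Organized that way you need neither Lemma~\ref{le:haone}, nor the eigenvalue splitting, nor \cite[Lemma~3]{krwa}. As actually organized, though, your proof hinges on the false converse above and is not complete.
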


\begin{proof}
  Assume $\mM_0 \ha \dots \ha \mM_\ell$. By Lemma~\ref{le:haone} we have
  $\mM_\ell \neq \emptyset$. For all $z^* \in \mM_1$, all eigenvalues of the Jacobian
  $W_1(z^*)$ have negative real parts by the definition of hyperbolic
  attractivity. Let now $k \in \{2, \dots, \ell\}$, $z^* \in \mM_k$, and define
  $\Rho = \operatorname{diag}(1, \dots, \varrho^{k-2})$.
  Using Lemma~\ref{le:haone} we fix $0 < \tau^* < 1$ such that for all
  $0 < \varrho^* < \tau^*$ all eigenvalues of
  $J'_{k-1}(\varrho^*, z^*) = \Rho(\varrho^*)A_{k-1}(z^*)$ have negative real parts. It
  follows that $\Rho(\varrho^*)A_{k-1}(z^*)$, $\Rho(\varrho^*)$, and
  $A_{k-1}(z^*)$ are all regular. Next, consider
  \begin{displaymath}
    J'_k=\pmat{
      \Rho A_{k-1} & \Rho B_{k}\\
      \varrho^{k-1} C_{k} & \varrho^{k-1} V_{k}}.
  \end{displaymath}
  Using Lemma~\ref{le:haone} once more, we find $0 < \sigma^* < \tau^*$ such that for
  all $0 < \varrho^* < \sigma^*$ also all eigenvalues of $J'_k(\varrho^*, z^*)$ have negative
  real parts. Now $J_k'(\varrho^*, z^*)$ satisfies condition (ii) of
  \cite[Lemma~3]{krwa} with $\delta = \sigma^*$ and
  $\varepsilon = (\varrho^*)^{k-1}$, which allows us to conclude that all eigenvalues of
  $(V_k - C_k(\Rho A_{k-1})^{-1}\Rho B_k)(\varrho^*, z^*) = (V_k -
  C_kA_{k-1}^{-1}\Rho^{-1}\Rho B_k)(\varrho^*, z^*) = W_k (z^*)$ have negative real
  parts as well.

  Assume, vice versa, that (i)--(iii) hold. We use induction on $k$ to show
  $\mM_0 \ha \dots \ha \mM_k$ for $1 \leq k \leq \ell$. For $k=1$ we have
  $\mM_0 \ha \mM_1$ by definition of hyperbolic attractivity. Assume that
  $2 \leq k \leq \ell$ and $\mM_0 \ha \dots \ha \mM_{k-1}$. By Lemma~\ref{le:haone} there
  exists $0 < \tau^*$ such that for all $0 < \sigma^* < \tau^*$ and all
  $z^* \in \mM_{k-1}$ all eigenvalues of $\Rho(\sigma^*) A_{k-1}(z^*)$ have negative
  real parts, where $\Rho = \operatorname{diag}(1, \dots, (\sigma^*)^{k-2})$. We rewrite
  $W_{k} = V_{k}-C_{k}(\Rho A_{k-1})^{-1} \Rho B_{k}$. Then $W_k(z^*)$ satisfies
  condition (i) of \cite[Lemma~3]{krwa} with $A=\Rho(\sigma^*) A_{k-1}(z^*)$,
  $B=\Rho(\sigma^*) B_k(z^*)$, $C=C_k(z^*)$ and $D=V_k(z^*)$. Thus there exists
  $0 < \delta$ such that for all $0 < \varepsilon < \delta$ all eigenvalues of
  \begin{displaymath}
    \pmat{ \Rho(\sigma^*) A_{k-1}(z^*) & \Rho(\sigma^*) B_k(z^*) \\ \varepsilon C_k(z^*) & \varepsilon
      V_k(z^*) }
  \end{displaymath}
  have negative real parts. Choosing $\varrho^* = \min\{\sigma^*, \sqrt[k-1]\varepsilon\}$ in
  Lemma~\ref{le:haone} yields $\mM_0 \ha \dots \ha \mM_k$.
\end{proof}

From now on let $\Gamma_k$ denote the Hurwitz conditions for the characteristic
polynomial of $W_k$, which---in contrast to the ones used in \eqref{eq:foea}---do
not depend on $\varrho$ anymore.

\begin{corollary}[Logic-Based Test for Hyperbolically Attractive
  $\ell$-Chains]\label{co:hyper}
  For $k \in \{1, \dots, m\}$ define
  \begin{align*}
    \varphi_k &= \bigl(\exists(0<z): F_k(z, 0) = 0 )\bigr),\\
    \psi_k &= \bigl(\forall(0 < z): F_k(z, 0) = 0 \Rightarrow \Gamma_k(z)\bigr).
  \end{align*}
  Let $\ell \in \{1, \dots, m\}$. Then $\mM_0 \ha \dots \ha \mM_\ell$ if and only if
  $\RR \models \varphi_\ell \land \bigwedge_{k=1}^\ell \psi_k$.
\end{corollary}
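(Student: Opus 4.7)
The plan is to combine Proposition~\ref{pr:hyper} with the Hurwitz criterion (Theorem~\ref{th:hurwitz}) pointwise. Proposition~\ref{pr:hyper} reduces hyperbolic attractivity of the $\ell$-chain to non-emptiness of $\mM_\ell$ together with spectral conditions on the Schur-complement matrices $W_k$ along $\mM_k$, and Hurwitz converts each such spectral statement into a polynomial system of sign conditions on the entries of $W_k$. The formula $\varphi_\ell \land \bigwedge_{k=1}^\ell \psi_k$ is then exactly the first-order encoding produced by this translation.

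Concretely I would proceed in three steps. First, since $\mM_\ell = \{\, z^* \in \mU \mid F_\ell(z^*,0) = 0 \,\}$ and $\mU = (0,\infty)^n$, the statement $\mM_\ell \neq \emptyset$ is literally $\varphi_\ell$, and by the nested chain \eqref{eq:chain} this already implies $\mM_k \neq \emptyset$ for every $k \leq \ell$; no further existential quantifiers are required. Second, for fixed $k$ the characteristic polynomial of $W_k$ is monic in $x$, so its leading coefficient is $1 > 0$; Theorem~\ref{th:hurwitz} then says pointwise that all eigenvalues of $W_k(z^*)$ have negative real parts if and only if $\Gamma_k(z^*)$ holds. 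Third, taking the universal closure over $z^* \in \mM_k$ of this equivalence yields exactly $\psi_k$, matching condition (ii) of Proposition~\ref{pr:hyper} for $k = 1$ and the spectral part of condition (iii) for $k \geq 2$.

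The main obstacle is the regularity of $A_{k-1}$ that is required to make $W_k = V_k - C_k A_{k-1}^{-1} B_k$ well-defined for $k \geq 2$. For the forward direction this is free: Proposition~\ref{pr:hyper}(iii) explicitly asserts regularity on $\mM_k$, so the Hurwitz determinants of $W_k$ are defined there and the Hurwitz translation applies verbatim. For the reverse direction one has to recover regularity from $\psi_k$ alone. The cleanest way is to interpret the Hurwitz determinants in $\Gamma_k$ after clearing the denominators $\det A_{k-1}$, so that each sign condition becomes a polynomial inequality in $z$ that simultaneously enforces $\det A_{k-1}(z^*) \neq 0$ and the positivity of the corresponding cleared Hurwitz expression. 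With this reading, $\varphi_\ell \land \bigwedge_{k=1}^\ell \psi_k$ recovers conditions (i)--(iii) of Proposition~\ref{pr:hyper}, which by that proposition is equivalent to $\mM_0 \ha \dots \ha \mM_\ell$, completing the equivalence.
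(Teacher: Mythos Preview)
Your forward direction is fine and matches the paper: Proposition~\ref{pr:hyper} gives (i)--(iii), and $\varphi_\ell$ together with the Hurwitz translation yields $\psi_1,\dots,\psi_\ell$.

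The reverse direction has a genuine gap, precisely at the point you flag as ``the main obstacle.'' Your proposed fix---clear the denominators $\det A_{k-1}$ in $\Gamma_k$ so that the resulting polynomial inequalities ``simultaneously enforce $\det A_{k-1}(z^*)\neq 0$''---does not work. After multiplying through by an even power of $\det A_{k-1}$, the cleared polynomial inequality can perfectly well hold at a point where $\det A_{k-1}$ vanishes; nothing in the sign of the numerator detects singularity of the denominator. Even without clearing, cancellation between numerator and denominator in the entries of $W_k=V_k-C_kA_{k-1}^{-1}B_k$ can make $\Gamma_k$ well-defined at singular points of $A_{k-1}$. So $\psi_k$ alone does not recover regularity, and you cannot invoke Proposition~\ref{pr:hyper}(iii) directly.

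The paper handles this differently: it argues the reverse direction by induction on $\ell$. The induction hypothesis gives $\mM_0\ha\cdots\ha\mM_{\ell-1}$; Lemma~\ref{le:haone} then guarantees that, for sufficiently small $\varrho^*$, the matrix $\Rho(\varrho^*)A_{\ell-1}(z^*)$ has all eigenvalues with negative real part for every $z^*\in\mM_{\ell-1}\supseteq\mM_\ell$, whence $A_{\ell-1}(z^*)$ is regular. Only now does $\psi_\ell$ supply the spectral condition on $W_\ell$, and Proposition~\ref{pr:hyper} closes the step. The regularity is thus inherited from the already-established shorter chain via Lemma~\ref{le:haone}, not extracted from $\psi_\ell$.
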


\begin{proof}
  Assume $\mM_0 \ha \dots \ha \mM_\ell$. Then Proposition~\ref{pr:hyper} yields its
  conditions (i)--(iii). Now, $\varphi_\ell$ holds as a formalization of (i). Furthermore,
  $\psi_1$ holds as a formalization of (ii), and the validity of $\psi_2$,
  \dots,~$\psi_\ell$ follows directly from (iii). Hence $\RR \models \varphi_\ell \land \bigwedge_{k=1}^\ell \psi_k$.

  Assume, vice versa, that
  $\RR \models \varphi_\ell \land \bigwedge_{k=1}^\ell \psi_k$. We show
  $\mM_0 \ha \dots \ha \mM_\ell$ by induction on $\ell$. If $\ell = 1$, then
  $\varphi_1$ formalizes (i) and $\psi_1$ formalizes (ii) in Proposition~\ref{pr:hyper},
  and we obtain $\mM_0 \ha \mM_1$. Let now $\ell > 1$. Then $\varphi_\ell$ formalizes
  Proposition~\ref{pr:hyper}\,(i). Our induction hypothesis yields
  $\mM_0 \ha \dots \ha \mM_{\ell-1}$. By Lemma~\ref{le:haone} there exists
  $0 < \tau^*$ such that for all $0 < \sigma^* < \tau^*$ and all
  $z^* \in \mM_{\ell-1} \supseteq \mM_\ell$ all eigenvalues of
  $\Rho(\sigma^*) A_{\ell-1}(z^*)$, where
  $\Rho = \operatorname{diag}(1, \dots, (\sigma^*)^{\ell-2})$, have negative real parts. In
  particular, $\Rho(\sigma^*) A_{\ell-1}(z^*)$ is regular and so is
  $A_{\ell-1}(z^*)$. Furthermore, the Hurwitz conditions in $\psi_\ell$ guarantee for all
  $z^*\in \mM_\ell$ that all eigenvalues of $W_\ell(z^*)$ have negative real parts.
  Taking these observations together, Proposition~\ref{pr:hyper}\,(iii) is
  satisfied, hence $\mM_0 \ha \dots \ha \mM_\ell$.
\end{proof}

In contrast to \eqref{eq:foea}, our first-order characterization
\begin{equation}
  \label{eq:foa}
  \textstyle\biggl(\exists(0<z): F_\ell(z, 0) = 0\biggr) \land
  \biggl(\bigwedge\limits_{k=1}^\ell \forall(0 < z): F_k(z, 0) = 0 \Rightarrow \Gamma_k(z)\biggr)
\end{equation}
in Corollary~\ref{co:hyper} has no quantifier alternation. Note that the two
top-level components of \eqref{eq:foa} establish two independent decision
problems, addressing non-emptiness of the manifold and our requirement on the
eigenvalues, respectively.

It is easy to see that for all $\ell \in \{1, \dots, m\}$ and all
$k \in \{1, \dots, \ell-1\}$, $\varphi_\ell$ entails $\varphi_k$. Thus \eqref{eq:foa} can be
equivalently rewritten as $\textstyle \bigwedge_{k=1}^\ell(\varphi_k \land \psi_k)$, explicitly:
\begin{equation}
  \label{eq:foav}
  \textstyle \bigwedge\limits_{k=1}^\ell\bigl(\exists(0<z): F_k(z, 0) = 0 \land
  \forall(0 < z): F_k(z, 0) = 0 \Rightarrow \Gamma_k(z)\bigr).
\end{equation}
Our approach tests the conjunction in \eqref{eq:foav} using a for-loop over $k$
in Algorithm~\ref{alg:cls}. Technically, this construction ensures with the test
for $\mM_{k-1} \ha \mM_k$ in Algorithm~\ref{alg:ha}
\begin{algorithm}[t]
  \renewcommand{\algorithmiccomment}[1]{%
    \hfill\parbox{\widthof{$\in \QQ[X]^{(\Xi+\xi) \times (\Xi+\xi)}$}}{\textcolor{gray}{#1}}}
  \begin{algorithmic}[1]
    \caption{\label{alg:ha}$\operatorname{IsHyperbolicallyAttractive}$}
    \REQUIRE 1.~$M$,~~2.~$Z$,~~3.~$z$,~~4.~$F$,~~5.~$f$,~~6.~$k$,~~7.~$A$, as in
    the calling Algorithm~\ref{alg:cls}

    Knowing that $\mM_{0} \ha \dots \ha \mM_{k-1}$, we check here whether also
    $\mM_{k-1} \ha \mM_{k}$. We denote $\xi := |f| = |z|$, $\Xi := |F| = |Z|$, and
    $X=\{x_1, \dots, x_n\}$. In these terms, $A \in \QQ[X]^{\Xi \times \Xi}$. \smallskip

    \ENSURE 1.~Boolean,~~2.~$A' \in \QQ[X]^{(\Xi + \xi) \times (\Xi + \xi)}$
    \smallskip
    
    \IF{\label{ha:neimodels}\NOT $\RR \models \underline{\exists}\bigwedge M$}
    \RETURN{$\operatorname{false}$, $\pmat{~}$}
    \ENDIF \label{ha:neii}
    \STATE{\label{ha:jac}$V := \operatorname{Jacobian}(f, z)$
      \COMMENT{$\in \QQ[X]^{\xi \times \xi}$}}
    \IF{\label{ha:ifl}$k = 1$}
    \STATE{$W := V$}
    \STATE{$A' := V$}
    \ELSE
    \STATE{$B := \operatorname{Jacobian}(F, z)$
      \COMMENT{$\in \QQ[X]^{\Xi \times \xi}$}}
    \STATE{$C := \operatorname{Jacobian}(f, Z)$
      \COMMENT{$\in \QQ[X]^{\xi \times \Xi}$}}
    \STATE{$W := V - C A^{-1} B$
      \COMMENT{$\in \QQ[X]^{\xi \times \xi}$}}
    \STATE{$A' := \begin{pmatrix} A & B\\ C & V \end{pmatrix}$
      \COMMENT{$\in \QQ[X]^{(\Xi+\xi) \times (\Xi+\xi)}$}}
    \ENDIF\label{ha:endl}
    \STATE{\label{ha:char}$\chi := \lambda^{\xi} + \dots + a_{\xi} := \operatorname{CharacteristicPolynomial}(W)$
      \COMMENT{$\in \QQ[X][\lambda]$}\label{l:chi}}
    \STATE{$H := \operatorname{HurwitzMatrix}(\chi)$
      \COMMENT{$\in \QQ[X]^{\xi \times \xi}$}}
    \FOR{$j := 1$ \TO $\xi-1$}
    \STATE{$\Delta_j := \det \pmat{H_{r, s}}_{1 \leq r, s \leq j}$
      \COMMENT{$\in \QQ[X]$}}
    \ENDFOR
    \STATE{\label{ha:gamma}$\Gamma := \{\Delta_1 > 0, \dots, \Delta_{\xi - 1} > 0, a_{\xi} > 0\}$}
    \RETURN \label{ha:finally}$\RR \models \underline{\forall}(\bigwedge M \longrightarrow \bigwedge \Gamma)$, $A'$
  \end{algorithmic}
\end{algorithm}
that $\mM_0 \ha \dots \ha \mM_{k-1}$ already holds, and exploits the fact that
$\psi_k$ and $\varphi_k$ do not refer to smaller indices than $k$.

In l.\ref{ha:neimodels}--\ref{ha:neii} we test the validity of $\varphi_k$. Using from
the input the defining inequalities and equations $M = U \circ M_k$ of $\mM_{k}$
along with $Z=Z_{k-1}$, $z=z_k$, $F=F_{k-1}$, $f=f_k$, and $A=A_{k-1}$, we
construct in l.\ref{ha:jac}--\ref{ha:endl} $A'=A_k$ as noted in
Proposition~\ref{pr:hyper}. In l.\ref{ha:char}--\ref{ha:gamma} we construct the
Hurwitz conditions $\Gamma=\Gamma_k$ according to Theorem~\ref{th:hurwitz}. On the grounds
of the validity of $\varphi_k$ tested in l.\ref{ha:neimodels}, we finally test in
l.\ref{ha:finally} the validity of $\psi_k$ and return a corresponding Boolean
value. We additionally return $A'=A_{k}$ for reuse with the next iteration. The
validity tests for $\varphi_k$ and $\psi_k$ in l.\ref{ha:neimodels} and
l.\ref{ha:finally}, respectively, again amount to SMT solving, this time using
the logic \texttt{QF\_NRA} \cite{BarrettFontaine:17a} for quantifier-free
nonlinear real arithmetic. Recall the positive integer parameter $p$ used for
the precision with both Algorithm~\ref{alg:tropicalc} and
Algorithm~\ref{alg:tropicald}. For $p > 1$ symbolic computation possibly yields
fractional powers of numbers in the defining equations for manifolds as well as
in the vector fields of the differential equations. Such expressions are not
covered by \texttt{QF\_NRA}. When this happens, we catch the corresponding error
from the SMT solver and restart with floats.

\subsection{Sufficient Smoothness Criteria}\label{se:diff}
Let us get back to the requirement in Sect.~\ref{se:extfen} that $\wg_1$,
\dots,~$\wg_\ell$ and $\eta_{\ell+1}\wg_{\ell+1} $, \dots,~$\eta_m\wg_m$ occurring on the right hand
sides of system \eqref{fullsys} are all smooth on an open neighborhood of
$\mU \times [0,\vartheta_1) \times \cdots \times [0,\vartheta_{\ell-1})$ with
$\vartheta_1>0$, \dots, $\vartheta_{\ell-1}>0$. A first sufficient criterion for
smoothness is that all those expressions are polynomials in $z$ and $\wdelta$.

Recall the definitions of $\wg_k$ for $k \in \{1, \dots, m\}$ in \eqref{eq:defwg} and
of $\eta_k$ for $k \in \{\ell+1, \dots, m\}$ in \eqref{eq:defeta}. For
$k \in \{1, \dots, m\}$ and $j \in \{1, \dots, w_k\}$ one finds nonnegative
$r_1$, \dots,~$r_{\ell-1} \in \QQ$ such that
\begin{displaymath}
  \langle (\beta_1, \dots, \beta_{\ell-1}), (r_1, \dots, r_{\ell-1}) \rangle = b_{k,j}',
\end{displaymath}
and for $k \in \{\ell + 1, \dots, m\}$ one finds nonnegative $r_1$,
\dots,~$r_{\ell-1} \in \QQ$ such that
\begin{displaymath}
  \langle (\beta_1, \dots, \beta_{\ell-1}), (r_1, \dots, r_{\ell-1}) \rangle = b_k - b_\ell.
\end{displaymath}
Such representations always exist but are not unique in general. If one even
finds suitable nonnegative integers $r_1$, \dots,~$r_{\ell-1} \in \NN$, which do not always
exist, then one obtains $\wg_1$, \dots,~$\wg_m$ as polynomials in $z$ and
$\wdelta$, and $\eta_{\ell+1}$, \dots,~$\eta_{m}$ as polynomials in $\wdelta$, which is
sufficient for our first criterion above.

An improved but still only sufficient criterion uses similar constructions to
directly verify the existence of polynomial representations of the products
$\eta_{\ell+1}\wg_{\ell+1} $, \dots,~$\eta_m\wg_m$, in contrast to considering the factors
independently. From an algorithmic point of view, we furthermore have to take
into account that $P_1$, \dots,~$P_m$ obtained in Algorithm~\ref{alg:scale} do not
contain $b_{k,j}'$ but $b_k + b_{k,j}'$. For $k \in \{1, \dots, \ell\}$ and
$j \in \{1, \dots, w_k\}$ we try to find $r_1$, \dots,~$r_{\ell-1} \in \NN$ such that
\begin{equation}\label{eq:sdcone}
  \langle (\beta_1, \dots, \beta_{\ell-1}), (r_1, \dots, r_{\ell-1}) \rangle = b_{k,j}' = (b_k + b_{k,j}') - b_k
  > 0,
\end{equation}
and for $k \in \{\ell + 1, \dots, m\}$ we try to find $r_1$, \dots,~$r_{\ell-1} \in \NN$ such that
\begin{equation}\label{eq:sdctwo}
  \langle (\beta_1, \dots, \beta_{\ell-1}), (r_1, \dots, r_{\ell-1}) \rangle = (b_k - b_\ell) + b_{k,j}' = (b_k +
  b_{k,j}') - b_\ell > 1.
\end{equation}
Notably, such representations exist whenever $1 \in \{\beta_1, \dots, \beta_{\ell-1}\}$.

On these grounds, we introduce Algorithm~\ref{alg:sdc}, 
\begin{algorithm}[t]
  \renewcommand{\algorithmiccomment}[1]{%
    \hfill\parbox{\widthof{$\subseteq \NN \setminus \{0\}$}}{\textcolor{gray}{#1}}}
  \caption{\label{alg:sdc}$\operatorname{TestSmoothness}$}
  \begin{algorithmic}[1]
    \REQUIRE $[T_1, \dots, T_m]$, $[P_1, \dots, P_m]$, $\ell$ as in the calling
    Algorithm~\ref{alg:cls}:
    \begin{enumerate}[nolistsep]\raggedright
    \item $[T_1, \dots, T_m]$, a list of lists $z_k' = \delta^{b_k}f_k$;
    \item $[P_1, \dots, P_m]$, a list of lists of polynomials in $\QQ[x_1, \dots, x_n][\delta]$;
    \item $\ell \in \NN$, $\ell \geq 2$;
    \end{enumerate}
    \smallskip

    We check here a sufficient criterion for smoothness as required for
    \eqref{fullsys}.\smallskip
    
    \ENSURE
    ``true'' or ``failed'' in terms of a 3-valued logic;
    \smallskip
    
    \STATE{\label{sdc:bone}$b_1 := 0$}
    \FOR{$k := 2$ \TO $\ell$}
    \STATE{$b_{k} := \text{the unique exponent of $\delta$ in $T_k$}$}
    \STATE{$\beta_{k-1} := b_k - b_{k-1}$}
    \IF{$\beta_{k-1} = 1$}
    \RETURN{$\operatorname{true}$}
    \ENDIF
    \ENDFOR\label{sdc:endfor}
    \STATE{\label{sdc:ebegin}$E := \emptyset$}
    \FOR{$k=1$ \TO $m$}
    \FORALL{$p$ \textbf{in} $P_k$}
    \STATE{$E := E \cup \{\,\deg_\delta m - b_{\min(k, \ell)} \mid \text{$m$ monomial of $p$}\,\}$
      \COMMENT{$\subseteq \NN \setminus \{0\}$}}
    \ENDFOR
    \ENDFOR\label{sdc:eend}
    \FORALL{$e \in E$}
    \IF{\label{sdc:test}\NOT $\ZZ \models \exists r_1 \ldots \exists r_{\ell-1} (r_1 \ge 0 \land \dots \land r_{\ell-1} \ge 0 \land \langle (\beta_1, \dots, \beta_{\ell-1}), (r_1, \dots, r_{\ell-1}) \rangle = e)$}
    \RETURN{\label{sdc:false}$\operatorname{failed}$}
    \ENDIF
    \ENDFOR
    \RETURN{\label{sdc:true}$\operatorname{true}$}
  \end{algorithmic}
\end{algorithm}
which specifies the sufficient test applied in l.\ref{cm:sdc} of
Algorithm~\ref{alg:cls}. The first two parameters $[T_1, \dots, T_m]$ and
$[P_1, \dots, P_m]$ originate from Algorithm~\ref{alg:scale}, while the last
parameter $\ell$ originates from the calling Algorithm~\ref{alg:cls}.

In l.\ref{sdc:bone}--\ref{sdc:endfor} of Algorithm~\ref{alg:sdc} we compute
$\beta_1$, \dots, $\beta_{\ell-1}$ as defined in \eqref{eq:defbeta} and simultaneously obtain
$b_1$, \dots, $b_\ell$. In l.\ref{sdc:ebegin}--\ref{sdc:eend} we compute the right hand
sides of the conditions in \eqref{eq:sdcone} or \eqref{eq:sdctwo}, depending on
the current index $k$. For checking those conditions in l.\ref{sdc:test} we once
more employ SMT solving, this time using the adequate logic \texttt{QF\_LIA}
\cite{BarrettFontaine:17a} for quantifier-free linear integer arithmetic. Since
we are aiming at nonnegative integer solutions, we introduce explicit
non-negativity conditions $r_1 \ge 0$, \ldots,~$r_{\ell-1} \ge 0$. In case of
unsatisfiability Algorithm~\ref{alg:sdc} returns ``failed'' in
l.\ref{sdc:false}. Recall that is this case the calling Algorithm~\ref{alg:cls}
issues a warning but continues. In case of satisfiability, in contrast,
smoothness is guaranteed, we reach l.\ref{sdc:true}, and return ``true.'' We
remark that the computation time spent on $E$ is negligible compared to the SMT
solving later on. The construction of the entire set $E$ beforehand avoids
duplicate SMT instances.

\section{Algebraic Simplification of Reduced Systems}\label{se:simpl}

In the output $(M_0, T_1, R_1)$, \dots, $(M_{\ell-1}, T_\ell, R_\ell)$ of
Algorithm~\ref{alg:cls}, the $T_k$ are taken literally from the input, and the
$M_{k-1}$ and $R_k$ are obtained via quite straightforward rewriting of the
input. As a matter of fact, the computationally hard part of
Algorithm~\ref{alg:cls} consists in the computation of the upper index $\ell$. We
now want to rewrite the triplets $(M_{k-1}, T_k, R_k)$ once more, aiming at less
straightforward but simpler and, hopefully, more intuitive representations. The
principal idea is to heuristically eliminate on the right hand side of the
differential equations in $T_k$ those variables whose derivatives have already
occurred as left hand sides in one of the $T_1$, \dots,~$T_{k-1}$. Of course, our
simplifications will preserve all relevant properties of $(M_0, T_1, R_1)$, \dots,
$(M_{\ell-1}, T_\ell, R_\ell)$, such as hyperbolic attractivity and sufficient
differentiability. Technically, our next Algorithm~\ref{alg:simpl}
\begin{algorithm}[t]
  \renewcommand{\algorithmiccomment}[1]{%
    \hfill\parbox{\widthof{$= \hat{f}_\ell(z,0) \in \QQ[X]^{\xi_\ell}$}}{\textcolor{gray}{#1}}}
  \caption{$\operatorname{SimplifyReducedSystems}$\label{alg:simpl}}
  \begin{algorithmic}[1]
    \REQUIRE A list
    $[(M_0, T_1, R_1), \dots, (M_{\ell-1}, T_{\ell}, R_{\ell})]$, the output of
    Algorithm~\ref{alg:cls}, with entries corresponding to
    \eqref{fullsyslimitdelta} \smallskip
    
    \ENSURE A list
    $[(M_0', T_1', R_1), \dots, (M_{\ell-1}', T_{\ell}', R_{\ell})]$; $M_{k-1}'$ describes
    the same manifold as $M_{k-1}$ in a canonical form; the system $T_k'$ is
    equivalent to $T_k$ modulo $M_{k-1}'$, its right hand sides are in a
    canonical normal form modulo $M_{k-1}'$, possibly with fewer different
    differential variables than $T_k$\smallskip

    \FOR{\label{simpl:loop}$k:=1$ \TO $\ell$}
    \STATE{$z_k := \{\,x \mid x' = g \in T_k\,\}$}
    \ENDFOR
    \STATE{$y := \{\,x \mid x' = 0 \in R_{\ell}\,\}$}
    \STATE{\label{simpl:omega}$\omega := \text{a block term order with $z_1 \gg \dots \gg
        z_\ell \gg y$}$}
    \FOR{$k:=1$ \TO $\ell$}
    \STATE{$F := [\, f \mid f=0 \in M_{k-1}\,]$}
    \STATE{\label{simpl:gb}$G := \operatorname{GroebnerBasis}(\operatorname{Radical}(F), \omega)$}
    \STATE{\label{simpl:mm}$M_{k-1}' := [\, g=0 \mid g \in G\,]$}
    \STATE{$T_k' := [\,]$}
    \FOR{\label{simpl:loopx}$x'=g$ \textbf{in} $T_k$}
    \STATE{\label{simpl:tk}$T_k' := T_k' \circ [x' = h]$ where $g \longrightarrow_G^* h$ and $h$ is irreducible
      mod $G$}
    \ENDFOR\label{simpl:endfor}
    \ENDFOR
    \RETURN{$[(M_0', T_1', R_1), \dots, (M_{\ell-1}', T_{\ell}', R_{\ell})]$}
  \end{algorithmic}
\end{algorithm}
employs Gröbner basis techniques \cite{Buchberger:65a,BeckerWeispfenning:93a}.

Recall that $z_k$ are the variables occurring on the left hand sides of
differential equations in $T_k$, and $Z_{k-1} = (z_1, \dots, z_{k-1})$. In
l.\ref{simpl:loop}--\ref{simpl:omega} we construct a block term order $\omega$ on all
variables $\{x_1, \dots, x_n\}$ so that variables from $Z_{k-1}$ are larger than
variables from $z_k$. This ensures that all multivariate polynomial reductions
with respect to $\omega$ throughout our algorithm will eliminate variables from
$Z_{k-1}$ in favor of variables from $z_k$ rather than vice versa. Prominent
examples for such block orders are pure lexicographical orders, but ordering by
total degree inside the $z_1$, \dots,~$z_\ell$, $y$ will heuristically give more
efficient computations.

Recall that the radical ideal $\sqrt{\langle F \rangle}$ is the infinite set of \emph{all}
polynomials with the same common complex roots as $F$. In l.\ref{simpl:gb}, we
compute a finite reduced Gröbner basis $G$ with respect to $\omega$ of that radical.
If radical computation is not available on the software side, then the algorithm
remains correct with a Gröbner basis of the ideal $\langle F \rangle$ instead of the radical
ideal, but might miss some simplifications.

In l.\ref{simpl:mm}, the polynomials in $G$ equivalently replace the left hand
side polynomials of the equations in $M_{k-1}$. In l.\ref{simpl:tk}, reduction
with respect to $\omega$, which comes with heuristic elimination of variables,
applies once more to the reduction results $h$ obtained from right hand sides
$g$ of differential equations in $T_k$. Since $G$ is a Gröbner basis, the
reduction in l.\ref{simpl:loopx}--\ref{simpl:endfor} furthermore produces unique
normal forms with the following property: if two polynomials $g_1$, $g_2$
coincide on the manifold $\mM_{k-1}$ defined by $M_{k-1}$, then they reduce to
the same normal form $h$. In particular, if $g_1$ vanishes on $\mM_{k-1}$, then
it reduces to $0$. We call the output of Algorithm~\ref{alg:simpl}
\emph{simplified reduced systems}.

\section{Back-Transformation of Reduced Systems}\label{se:tb}
Let $\ell \in \{2, \dots, m\}$ and $k \in \{1, \dots, \ell\}$. Recall that a triplet
$(M_{k-1}, T_k, R_k)$ obtained from Algorithm~\ref{alg:cls} describes a reduced
system according to \eqref{fullsyslimitdelta}:
\begin{equation*}
F_{k-1}(z,0) = 0, \quad
\diff{z_k}{\tau} = \delta^{b_k}\wf_k(z,0),\quad
\diff{z_{k+1}}{\tau} = \dots = \diff{z_m}{\tau} = 0.
\end{equation*}
A corresponding simplified system $(M_{k-1}', T_k', R_k)$ can be obtained from
Algorithm~\ref{alg:simpl} via an equivalence transformation on the set of
equations $M_{k-1}$ and further equivalence transformations modulo $M_{k-1}$ on
the right hand sides of the differential equations in $T_k$, while the left hand
sides of those differential equations remain untouched. It is not hard to see
that for both these outputs scaling can be reversed using the substitution
\begin{displaymath}
\sigma = [x_1 \gets y_1 / \varepsilon^{d_1}, \dots, x_n \gets y_n / \varepsilon^{d_n}] \circ [\tau \gets \varepsilon^\mu t] \circ [\delta \gets \varepsilon^{1/q}] \circ [\varepsilon \gets \varepsilon_*]
\end{displaymath}
obtained with Algorithm~\ref{alg:scale}.

Using names $M_{k-1}$, $T_k$, $R_k$ as in the unsimplified system, this yields a
raw back-transformation $(M_{k-1}\sigma, T_k\sigma, R_k\sigma)$. We define
$M_{k-1}^* = M_{k-1}\sigma$. The system $T_k\sigma$ can be written as
\begin{equation*}
  \diff{y_j}{\varepsilon_*^{\mu+d_j}t} = \varepsilon_*^{b_k/q} (\wf_k(z,0)_j\sigma),\quad
  x_j \in z_k.
\end{equation*}
We multiply by $\varepsilon_*^{\mu+d_j}$ in order to arrive at differential equations in
$\diff{y_j}{t}$. Furthermore, recall that the explicit factor $\delta^{b_k}$ in the
original $T_k$ corresponds to a time scale $\delta^{b_k}\tau$. The corresponding time
scale in $t$ is given by
$(\delta^{b_k} \tau)\sigma = \varepsilon_*^{b_k/q+\mu} t$, which we make explicit by equivalently
rewriting $T_k\sigma$ as $T_k^*$ as follows:
\begin{equation*}
  \diff{y_j}{t} = \varepsilon_*^{b_k/q+\mu} (\varepsilon_*^{d_j} \wf_k(z,0)_j\sigma),\quad
  x_j \in z_k.
\end{equation*}
Similarly, $R_k\sigma$ can be rewritten as $R_k^*$ as follows:
\begin{equation*}
  \diff{y_j}{t} = 0,\quad
  x_j \in z_{k+1} \cup \dots \cup z_m.
\end{equation*}

Recall that $M_{k-1}$ describes a manifolds $\mM_{k-1}$ over the positive first
orthant $\mU$, which is defined by inequalities $U = \{ x_1>0, \dots, x_n>0 \}$ not
explicit in $M_{k-1}$. Following our construction above, this translates into
$U\sigma = \{ y_1/\varepsilon_*^{d_1} > 0, \dots, y_n/\varepsilon_*^{d_n} > 0 \}$, which describes again the
positive first orthant. Furthermore, the nestedness \eqref{eq:chain} is
preserved:
\begin{displaymath}
  \mU = \mM_0^* \supseteq \mM_1^* \supseteq \dots \supseteq \mM_{\ell-1}^*.
\end{displaymath}
Finally, the system $(T_k^*,R_k^*)$ defines differential equations on
$\mM_{k-1}^*$.

We call $(M_0^*, T_1^*, R_1^*)$, \dots, $(M_{\ell-1}^*, T_\ell^*, R_\ell^*)$
\emph{back-transformed reduced systems}. In terms of the definitions after
\eqref{eq:scaleandtruncate} in Sect.~\ref{se:scale} we have reverted the scaling
but not the partitioning and not the truncating. Furthermore, we have preserved
all information obtained with the computation of the reduced systems in
Sect.~\ref{se:spt}, where we keep the time scale factors explicit, and with
their algebraic simplification in Sect.~\ref{se:simpl}.

\begin{algorithm}
  \renewcommand{\algorithmiccomment}[1]{%
    \hfill\parbox{\widthof{$=\varepsilon_*^{b_k/q + \mu + d_j} (f_j\sigma)$}}{\textcolor{gray}{#1}}}
  \caption{$\operatorname{TransformBack}$\label{alg:tback}}
  \begin{algorithmic}[1]
    \REQUIRE
    \begin{enumerate}[nolistsep]
    \item $[(M_0, T_1, R_1), \dots, (M_{\ell-1}, T_{\ell}, R_{\ell})]$, the output of either
      Algorithm~\ref{alg:cls} or Algorithm~\ref{alg:simpl};
    \item $\sigma$, the output of Algorithm~\ref{alg:scale}
    \end{enumerate}
    \smallskip
    
    \ENSURE A list
    $[(M_0^*, T_1^*, R_1^*), \dots, (M_{\ell-1}^*, T_{\ell}^*, R_{\ell}^*)]$. \smallskip
    \smallskip
    
    \FOR{$k := 1$ \TO $\ell$}
    \STATE{$M_{k-1}^* := M_{k-1}\sigma$}
    \STATE{\label{alg:tback:v}$v := \bigl((\delta^{b_k} \tau)\sigma\bigr)/t$, extracting
      $\delta^{b_k}$ from $T_k$
      \COMMENT{$=\varepsilon_*^{(b_k/q)+\mu}$}}
    \STATE{$T_k^* := [\,]$}
    \FORALL{$x_j' = \delta^{b_k}f_j \in T_k$}
    \STATE{\label{alg:tback:h}$h := (y_jf_j/x_j)\sigma$
    \COMMENT{$= \varepsilon_*^{d_j} (f_j\sigma)$}}
    \STATE{$T_k^* := T_k^* \circ [\dot{y_j} = v h]$
    \COMMENT{$=\varepsilon_*^{b_k/q + \mu + d_j} (f_j\sigma)$}}
    \ENDFOR
    \STATE{$R_k^* := [\,\dot{y_j} = 0 \mid x_j' = 0 \in R_k\,]$}
    \ENDFOR
    \RETURN{$[(M_0^*, T_1^*, R_1^*), \dots, (M_{\ell-1}^*, T_{\ell}^*, R_{\ell}^*)]$}
  \end{algorithmic}
\end{algorithm}
\FloatBarrier

Our back-transformation is realized in Algorithm~\ref{alg:tback}. In
l.\ref{alg:tback:v} we compute the time scale factor
$\varepsilon_*^{(b_k/q) + \mu}$ for $T_k^*$ as described above, and in l.\ref{alg:tback:h}
we compute its co-factor $\varepsilon_*^{d_j} f\sigma$ as $(y_jf/x_j)\sigma$.

\section{The Big Picture}\label{se:picture}
Let us discuss what has been gained in $(M_0^*, T_1^*, R_1^*)$, \dots,
$(M_{\ell-1}^*, T_{\ell}^*, R_{\ell}^*)$ for our original system $S$ in \eqref{origeq}.
We are faced with a discrepancy. On the one hand, we fix
$\varepsilon = \varepsilon_*$. On the other hand, the requirement that $\wdelta$ be sufficiently
small in Theorem~\ref{th:ct} entails that $\varepsilon$ be sufficiently small. It is of
crucial importance whether invariant manifolds of \eqref{paramscaleeq}, which do
exist for sufficiently small $\varepsilon$, persist at
$\varepsilon=\varepsilon_*$. We are not aware of any algorithmic results addressing this question.
In particular, singular perturbation theory is typically concerned with
asymptotic results, which are not helpful here.

In case of persistence, there exist nested invariant manifolds $\mN_{k-1}^*$
which are Hausdorff-close to $\mM_{k-1}^*$ for system~\eqref{origeq}. Moreover,
the differential equations $T_k^*$ associated with $\mM_{k-1}^*$ correspond to
the $k$th level in a hierarchy of time scales and approximate the flow on
$\mN_{k-1}^*$. We have achieved a decomposition of \eqref{origeq} into $\ell$
systems of smaller dimension. At the very least, one obtains a well-educated
guess about possible candidates for invariant manifolds and reductions. For the
investigation of those candidates one may check the $\mN_k^*$ for approximate
invariance using, e.g., numerical methods, or by applying criteria proposed in
\cite{NoethenWalcher:09}.

Algorithm~\ref{alg:tmr} 
\begin{algorithm}[t]
  \renewcommand{\algorithmiccomment}[1]{%
    \hfill\parbox{\widthof{$= [(M_0^*, T_1^*, R_1^*), \dots, (sM_{\ell-1}^*, T_\ell^*,
        R_\ell^*)]$}}{\textcolor{gray}{#1}}}
  \caption{$\operatorname{TropicalMultiReduce}$\label{alg:tmr}}
  \begin{algorithmic}[1]
    \REQUIRE
    \begin{enumerate}[nolistsep]
    \item
    A list $S = [\dot{y_1} = f_1, \dots, \dot{y_n} = f_n]$ of
    autonomous first-order ordinary differential equations where $f_1$, \dots,
    $f_n \in \QQ[y_1, \dots, y_n]$;
    \item $\varepsilon_* \in (0,1) \cap \QQ$;
    \item $p \in \NN \setminus \{0\}$
    \end{enumerate}
    \smallskip

    \ENSURE
    A list $[(M_0^*, T_1^*, R_1^*), \dots, (M_{\ell-1}^*, T_\ell^*, R_\ell^*)]$ of triplets
    where $\ell \in \{2, \dots, m\}$, or the empty list. For
    $k \in \{1, \dots, \ell\}$, $M_{k-1}^*$ is a list of real constraints defining
    $\mM_{k-1}^* \subseteq \RR^n$; $T_k^*$ is a list of differential equations;
    $R_k^*$ is a list of trivial differential equations $\dot{y} = 0$ for all
    differential variables from $T_{k+1}^*$, \dots,~$T_m^*$.\smallskip

    The relevance of the output in terms of the input is discussed in
    Sect.~\ref{se:tb}.\smallskip

    \STATE{$\operatorname{TropicalC}_{S, \varepsilon_*, p} :=
      \operatorname{curry}(\operatorname{TropicalC}, S, \varepsilon_*, p)$
      \COMMENT{$\operatorname{TropicalC}_{S, \varepsilon_*, p}$ is a binary function}}
    \STATE{$\operatorname{TropicalD}_{S, \varepsilon_*, p} :=
      \operatorname{curry}(\operatorname{TropicalD}, S, \varepsilon_*, p)$
      \COMMENT{$\operatorname{TropicalD}_{S, \varepsilon_*, p}$ is a constant function}}
    \STATE{$T, P, \sigma := \operatorname{ScaleAndTruncate}(S,
      \operatorname{TropicalC}_{S, \varepsilon_*, p}, \operatorname{TropicalD}_{S, \varepsilon_*, p}, \varepsilon_*)$}
    \STATE{$\Sigma := \operatorname{ComputeReducedSystems}(T, P)$
      \COMMENT{$= [(M_0, T_1, R_1), \dots, (M_{\ell-1}, T_\ell, R_\ell)]$}}
    \STATE{$\Sigma' := \operatorname{SimplifyReducedSystems}(\Sigma)$
      \COMMENT{$= [(M_0', T_1', R_1), \dots, (M_{\ell-1}', T_\ell', R_\ell)]$}}
    \STATE{$\Sigma^* := \operatorname{TransformBack}(\Sigma', \sigma)$
      \COMMENT{$= [(M_0^*, T_1^*, R_1^*), \dots, (M_{\ell-1}^*, T_\ell^*, R_\ell^*)]$}}
    \RETURN{$\Sigma^*$}
  \end{algorithmic}
\end{algorithm}
provides a wrapper combining all our algorithms to decompose input systems like
\eqref{origeq} into several time scales. The underlying tropicalization is not
made explicit, and the result is presented on the original scale.
Figure~\ref{fig:graph}
\begin{figure}
  \small
  \centering
  
  \tikzset{
    arrowfill/.style={
      top color=orange!30,
      bottom color=orange!30,
    },
    arrowstyle/.style={
      arrowfill,
      single arrow,
      minimum height=#1,
      single arrow head extend=.3cm
    }
  }
  \begin{tikzpicture}[node distance=3.5cm]
    \tikzstyle{algo} = [rectangle, rounded corners, align=center,
    minimum width=4cm, minimum height=1cm, fill=orange!30]

    \tikzstyle{control} = [->]

    \node (TMR) [algo, minimum width = 8cm] {
      \textbf{Algorithm~\ref{alg:tmr}}\\ TropicalMultiReduce};

    \node (TMR0) [minimum width = 7.5cm, minimum height=1cm] {};

    \node (SAT) [algo, xshift=-2cm, yshift=-1.5cm, below of=TMR] {
      \textbf{Algorithm~\ref{alg:scale}}\\\ ScaleAndTruncate};
    \node (CRS) [algo, below of=SAT, xshift=1.25cm, yshift=.5cm] {
      \textbf{Algorithm~\ref{alg:cls}}\\ ComputeReducedSystems};
    \node (SRS) [algo, below of=CRS, yshift=.5cm] {
      \textbf{Algorithm~\ref{alg:simpl}}\\ SimplifyReducedSystems};
    \node (TRB) [algo, below of=SRS, xshift=-1.25cm, yshift=.5cm] {
      \textbf{Algorithm~\ref{alg:tback}}\\ TransformBack};
    \node (TE) [algo, left of=SAT, xshift=-3.5cm, yshift=.25cm] {
      \textbf{Algorithm~\ref{alg:tropicalize}}\\ TropicalEquilibration};
    \node (TD) [algo, above of=TE, yshift=-2cm, xshift=1cm] {
      \textbf{Algorithm~\ref{alg:tropicald}}\\ $\operatorname{TropicalD}$};
    \node (TD0) [below=-1cm of TD, minimum height=1cm, minimum width=3.5cm] {};
    \node (TC) [algo, above of=TD, yshift=-2cm, xshift=1cm] {
      \textbf{Algorithm~\ref{alg:tropicalc}}\\ $\operatorname{TropicalC}$};
    \node (TC0) [below=-1cm of TC, minimum height=1cm, minimum width=3.5cm] {};
    \node (SDC)  [algo, left of=CRS, xshift=-3.75cm, yshift=1.5cm] {
      \textbf{Algorithm~\ref{alg:sdc}}\\ $\operatorname{TestSmoothness}$};
    \node (SDC0) [below=-1cm of SDC, minimum height=1cm, minimum width=3.5cm] {};
    \node (HHA) [algo, left of=CRS, xshift=-4.75cm] {
      \textbf{Algorithm~\ref{alg:ha}}\\ IsHyperbolicallyAttractive};

    \draw[control] (TMR0.south east) |- (SAT);
    \draw[control] (TMR0.south east) |- (CRS);
    \draw[control] (TMR0.south east) |- (SRS);
    \draw[control] (TMR0.south east) |- (TRB);
    \draw[control] (SAT.west) -| (TC0.south east);
    \draw[control] (SAT.west) -| (TD0.south east);
    \draw[control] (TD) to (TE);
    \draw[control] (CRS.west) to (HHA.east);
    \draw[control] (CRS.west) -| (SDC0.south east);

    \node (A0) [arrowstyle=1.5cm, left=0.25cm of TMR]{\phantom{M}};

    \node (A1) [arrowstyle=3.5cm, rotate=-90, yshift=0.25cm, below=-3.05cm of SAT]{\phantom{M}};
    \node [arrowstyle=2cm, rotate=180, right=0.25cm of TC.east, anchor=east] {
      \phantom{$S$, $\varepsilon_0$, $p$}
    };
    \node [right=0.25cm of TC, minimum width=2cm, anchor=west] {
      {$S$, $\varepsilon_*$, $p$}
    };
    \node [arrowstyle=2cm, rotate=180, right=1.25cm of TD.east, anchor=east] {
      \phantom{$S$, $\varepsilon_0$, $p$}
    };
    \node [right=1.25cm of TD, minimum width=2cm, anchor=west] {
      {$S$, $\varepsilon_*$, $p$}
    };

    \node (A2) [arrowstyle=1.5cm, rotate=-90, yshift=0.25cm, below=0.9cm of SAT]{\phantom{M}};
    \node (A21) [arrowstyle=2.25cm, rotate=180, right=.25cm of SDC.east,
    anchor=east]{\begin{turn}{180}$P_1, \dots, P_m$\end{turn}
       };
    \node (A22) [arrowstyle=7.5cm, rotate=-90, yshift=-1.2cm, below=3.8775cm of
    SAT]{\rlap{\phantom{M}}\begin{rotate}{90}\hskip1pt$\sigma$\end{rotate}};
    \node (A3) [arrowstyle=1.5cm, rotate=-90, yshift=-1cm, below=0.9cm of CRS]{\phantom{M}};
    \node (A4) [arrowstyle=1.5cm, rotate=-90, yshift=-1cm, below=0.9cm of SRS]{\phantom{M}};
    \node (A5) [arrowstyle=1.5cm, rotate=180, left=0.25cm of TRB, anchor=west]{\phantom{M}};

    \tikzstyle{data} = [align=left, text width=4.85cm];

    \node [left=0pt of A0, anchor=east, text width=5cm] {
      \begin{enumerate}
      \item $S = [\dot{y_1} = g_1$, \dots,~$\dot{y_n} = g_n]$\\
      \item $\varepsilon_*$
      \item $p$
      \end{enumerate}
    };

    \node [right=0.3 cm of A1.center, data] {
      \begin{enumerate}[1.]
      \item $S$
      \item $\operatorname{TropicalC}_{S, \varepsilon_*, p}$\\
      \item $\operatorname{TropicalD}_{S, \varepsilon_*, p}$\\
      \item $\varepsilon_*$
      \end{enumerate}
    };

    \node [right=0.65 cm of A2.center, data] {
      $T_1, \dots, T_m$,\\
      where $T_k$ is $z_k = \delta^{b_k} \wf_k(z, 0)$\\
      and $b_1 < \dots < b_m$
    };
    
    \node [right=0.65 cm of A3.center, data] {
      $(M_0, T_1, R_1) \ha \cdots \ha (M_{\ell-1}, T_\ell, R_\ell)$
    };

    \node [right=0.65 cm of A4.center, data] {
      $(M_0', T_1', R_1) \ha \cdots \ha (M_{\ell-1}', T_\ell', R_\ell)$
    };

    \node [left=0pt of A5.east, anchor=east, text width=5cm] {
      $(M_0^*, T_1^*, R_1^*) \ha \cdots \ha (M_{\ell-1}^*, T_\ell^*, R_\ell^*)$
    };
  \end{tikzpicture}

  \caption{Functional dependencies (thin arrows) and principal data flow
    (thick arrows) between our algorithms\label{fig:graph}}
\end{figure}
explains the functional dependencies and principal data flow between our
algorithms graphically.

\section{Computational Examples}\label{se:compex}

Based on our explicit algorithms in the present work, we have developed two
independent software prototypes realizing all methods described here. The first
one is in Python using SymPy \cite{sympy:17} for symbolic computation, pySMT
\cite{pysmt:15} as an interface to the SMT solver MathSAT5~\cite{mathsat}, and
SMTcut for the computation of tropical equilibrations~\cite{Lueders:20a}. The
second one is a Maple package, which makes use of Maple's built-in SMTLIB
package \cite{Forrest:17} for using the SMT solver Z3 \cite{z3}. For our
computations here we have used our Python code. Computation results are
identical with both systems, and timings are similar. We have conducted our
computations on a standard desktop computer with a 3.3~GHz 6-core Intel 5820K
CPU and 16~GB of main memory. Computation times listed are CPU times.

In the next section, we discuss in detail the computations for one specific
biological input system from the BioModels database, a repository of
mathematical models of biological processes \cite{le2006biomodels}. The
subsequent sections showcase several further such examples in a more concise
style. The focus here is on biological results. For an illustration of our
algorithms, we discuss in Appendix~\ref{app:examples} examples where reduction
stops at $\ell < m$ for various reasons.

\subsection{An Epidemic Model of Avian Influenza H5N6}
We consider \biomd{716}, which is related to the transmission dynamics of
subtype H5N6 of the influenza A virus in the Philippines in August 2017
\cite{LeeLao:18}. The model specifies four species: \textit{Susceptible birds}
(\texttt{S_b}), \textit{Infected birds} (\texttt{I_b}), \textit{Susceptible
  humans} (\texttt{S_h}), and \textit{Infected humans} (\texttt{I_a}), the
concentrations of which over time we map to differential variables $y_1$,
\dots,~$y_4$, respectively. The input system is given by
\begin{align*}
S = \bigl[ \textstyle{\diff{}{t} y_{1}} &= \textstyle{- \frac{9137}{2635182} y_{1} y_{2} - \frac{1}{730} y_{1} + \frac{412}{73}},  \\
\textstyle{\diff{}{t} y_{2}} &= \textstyle{\frac{9137}{2635182} y_{1} y_{2} - \frac{4652377}{961841430} y_{2}},  \\
\textstyle{\diff{}{t} y_{3}} &= \textstyle{- \frac{1}{6159375000} y_{2} y_{3} - \frac{1}{25258} y_{3} + \frac{40758549}{3650000}},  \\
\textstyle{\diff{}{t} y_{4}} &= \textstyle{\frac{1}{6159375000} y_{2} y_{3} - \frac{112500173}{2841525000000} y_{4}} \bigr].
\end{align*}

We choose $\varepsilon_* = \frac{1}{5}$, $p = 1$, and Algorithm~\ref{alg:tropicald}
non-deterministically selects $D = (-1, -4, -7, -3)$ from the tropical
equilibration. Algorithm~\ref{alg:scale} then yields the following scaled and
truncated system with three time scales:
\begin{align*}
T_{1} = \bigl[ \textstyle{\diff{}{\tau} x_{1}} &= \textstyle{1 \cdot \left(- \frac{5710625}{2635182} x_{1} x_{2} + \frac{412}{365}\right)} \bigr],  \\
T_{2} = \bigl[ \textstyle{\diff{}{\tau} x_{2}} &= \textstyle{\delta^{3} \cdot \left(\frac{5710625}{2635182} x_{1} x_{2} - \frac{116309425}{192368286} x_{2}\right)} \bigr],  \\
T_{3} = \bigl[ \textstyle{\diff{}{\tau} x_{3}} &= \textstyle{\delta^{6} \cdot \left(- \frac{15625}{25258} x_{3} + \frac{40758549}{18250000}\right)}, \\
\textstyle{\diff{}{\tau} x_{4}} &= \textstyle{\delta^{6} \cdot \left(\frac{15625}{15768} x_{2} x_{3} - \frac{112500173}{181857600} x_{4}\right)} \bigr].
\end{align*}
Notice that the lexicographic order of the differential variables is
coincidence. From this input, Algorithm~\ref{alg:cls} produces the following
reduced systems:
\begin{align*}
  M_{0} = \bigl[ &  \  \bigr],  
  & T_{1} = \bigl[ & \textstyle{\diff{}{\tau} x_{1}} = \textstyle{1 \cdot \left(- \frac{5710625}{2635182} x_{1} x_{2} + \frac{412}{365}\right)} \bigr],  \\
  && R_{1} = \bigl[ & \textstyle{\diff{}{\tau} x_{2}} = 0, \\
  &&& \textstyle{\diff{}{\tau} x_{3}} = 0, \\
  &&& \textstyle{\diff{}{\tau} x_{4}} = 0 \bigr],  \\[1ex]
  M_{1} = \bigl[ & \textstyle{2084378125 x_{1} x_{2} - 1085694984 = 0} \bigr], 
  & T_{2} = \bigl[ & \textstyle{\diff{}{\tau} x_{2}} = \textstyle{\delta^{3} \cdot \left(\frac{5710625}{2635182} x_{1} x_{2} - \frac{116309425}{192368286} x_{2}\right)} \bigr],  \\
  && R_{2}  = \bigl[ & \textstyle{\diff{}{\tau} x_{3}} = 0, \\
  &&& \textstyle{\diff{}{\tau} x_{4}} = 0 \bigr],  \\[1ex]
  M_{2} = \bigl[ & \textstyle{2084378125 x_{1} x_{2} - 1085694984 = 0}, 
  & T_{3} = \bigl[ & \textstyle{\diff{}{\tau} x_{3}} = \textstyle{\delta^{6} \cdot \left(- \frac{15625}{25258} x_{3} + \frac{40758549}{18250000}\right)}, \\
  & \textstyle{16675025 x_{1} x_{2} - 4652377 x_{2}} = 0 \bigr],  
  && \textstyle{\diff{}{\tau} x_{4}} = \textstyle{\delta^{6} \cdot \left(\frac{15625}{15768} x_{2} x_{3} - \frac{112500173}{181857600} x_{4}\right)} \bigr], \\
  && R_{3} = \bigl[ & \  \bigr].
\intertext{%
In that course, Algorithm~\ref{alg:ha} confirms hyperbolic attractivity
according to Sect.~\ref{se:ha} for all three scaled systems. 
Furthermore, Algorithm~\ref{alg:sdc}
applies the sufficient smoothness test from Sect.~\ref{se:diff} with
\begin{displaymath}
  \textstyle
  \ell=3,\quad b_1=3,\quad b_2=3,\quad 
  P_1 = 1 \cdot (-\delta^4 \cdot \frac{125}{146}x_1),\quad
  P_2 = \delta^6 \cdot (-\delta^4 \cdot \frac{15625}{15768} x_2 x_3).
\end{displaymath}
This yields $E = \{ 4 \}$, where $4$ cannot be expressed as an integer multiple of
$3$. Thus the test fails, which causes a warning in Algorithm~\ref{alg:cls}.
\endgraf
Algebraic simplification through Algorithm~\ref{alg:simpl} yields
the simplified reduced systems
}
  M_{0}' = \bigl[ &  \  \bigr],
  & T_{1}' = \bigl[ & \textstyle{\diff{}{\tau} x_{1}} = \textstyle{1 \cdot \left(- \frac{5710625}{2635182} x_{1} x_{2} + \frac{412}{365}\right)} \bigr],  \\
  && R_{1}' = \bigl[ & \textstyle{\diff{}{\tau} x_{2}} = 0, \\
  &&& \textstyle{\diff{}{\tau} x_{3}} = 0, \\
  &&& \textstyle{\diff{}{\tau} x_{4}} = 0 \bigr],  \\[1ex]
  M_{1}' = \bigl[ & \textstyle{x_{1}x_{2}} = \textstyle{\frac{1085694984}{2084378125}} \bigr], 
  & T_{2}' = \bigl[ & \textstyle{\diff{}{\tau} x_{2}} = \textstyle{\delta^{3} \cdot \left(- \frac{116309425}{192368286} x_{2} + \frac{412}{365}\right)} \bigr],  \\
  && R_{2}'  = \bigl[ & \textstyle{\diff{}{\tau} x_{3}} = 0, \\
  &&& \textstyle{\diff{}{\tau} x_{4}} = 0 \bigr],  \\[1ex]
  M_{2}' = \bigl[ & \textstyle{x_{1}} = \textstyle{\frac{4652377}{16675025}}, 
  & T_{3}' = \bigl[ & \textstyle{\diff{}{\tau} x_{3}} = \textstyle{\delta^{6} \cdot \left(- \frac{15625}{25258} x_{3} + \frac{40758549}{18250000}\right)}, \\
  & \textstyle{x_{2}} = \textstyle{\frac{1085694984}{581547125}} \bigr],  
  && \textstyle{\diff{}{\tau} x_{4}} = \textstyle{\delta^{6} \cdot \left(\frac{1884887125}{1018870563} x_{3} - \frac{112500173}{181857600} x_{4}\right)} \bigr], \\
  && R_{3}' = \bigl[ & \  \bigr].
\intertext{%
  Notice that our implementations conveniently rewrite equational constraints as
  monomial equations with numerical right hand sides when possible. This supports
  readability but is not essential for the simplifications applied here, which
  are based on Gröbner basis theory. Comparing $T_2'$ with $T_2$, we see that
  the equation for $x_1x_2$ in $M_1'$ is plugged in. Similarly, $M_2$ is
  simplified to $M_2'$, which is in turn used to reduce $T_3$ to $T_3'$.
  \endgraf
The back-transformed reduced systems as computed by Algorithm~\ref{alg:tback} read as follows:
}
  M_{0}^* = \bigl[ &  \  \bigr],  
  & T_{1}^* = \bigl[ & \textstyle{\diff{}{t} y_{1}} = \textstyle{1 \cdot \left(- \frac{9137}{2635182} y_{1} y_{2} + \frac{412}{73}\right)} \bigr],  \\
  && R_{1}^* = \bigl[ & \textstyle{\diff{}{t} y_{2}} = 0, \\
  &&& \textstyle{\diff{}{t} y_{3}} = 0, \\
  &&& \textstyle{\diff{}{t} y_{4}} = 0 \bigr],  \\[1ex]
  M_{1}^* = \bigl[ & \textstyle{y_{1}y_{2}} = \textstyle{\frac{1085694984}{667001}} \bigr], 
  & T_{2}^* = \bigl[ & \textstyle{\diff{}{t} y_{2}} = \textstyle{\frac{1}{125} \cdot \left(- \frac{116309425}{192368286} y_{2} + \frac{51500}{73}\right)} \bigr],  \\
  && R_{2}^*  = \bigl[ & \textstyle{\diff{}{t} y_{3}} = 0, \\
  &&& \textstyle{\diff{}{t} y_{4}} = 0 \bigr],  \\[1ex]
  M_{2}^* = \bigl[ & \textstyle{y_{1}} = \textstyle{\frac{4652377}{3335005}}, 
  & T_{3}^* = \bigl[ & \textstyle{\diff{}{t} y_{3}} = \textstyle{\frac{1}{15625} \cdot \left(- \frac{15625}{25258} y_{3} + \frac{203792745}{1168}\right)}, \\
  & \textstyle{y_{2}} = \textstyle{\frac{5428474920}{4652377}} \bigr],  
  && \textstyle{\diff{}{t} y_{4}} = \textstyle{\frac{1}{15625} \cdot \left(\frac{15079097}{5094352815} y_{3} - \frac{112500173}{181857600} y_{4}\right)} \bigr], \\
  && R_{3}^* = \bigl[ & \  \bigr].
\end{align*}
We compare $T_1^*$, \dots,~$T_3^*$ to the input system $S$: In the equation for
$\diff{}{t}y_1$, the monomial in $y_1$ is identified as a higher order term with
respect to $\delta$ and discarded by Algorithm~\ref{alg:scale}. In the equation for
$\diff{}{t}y_2$, the monomial in $y_1 y_2$ has been Gröbner-reduced to a
constant modulo the defining equation in $M_1'$. Similarly, the equation for
$\diff{}{t}y_3$ loses its monomial in $y_2 y_3$ by truncation of higher order
terms, and in the equation for $\diff{}{t}y_4$, the monomial in $y_2 y_3$ is
Gröbner-reduced to a monomial in $y_3$.

Notice the explicit constant factors on the right hand sides of the differential
equations in $T_1^*$, \dots,~$T_3^*$. They originate from factors $\delta^{b_k}$ in the
respective scaled systems $T_1$, \dots,~$T_3$, corresponding to
\eqref{fullscalesimplifiedint}. They are left explicit to make the time scale of
the differential equations apparent. We see that the system $T_2^* \circ R_2^*$ is
125 times slower than $T_1^* \circ R_1^*$, and $T_3^* \circ R_3^*$ is another 125 times
slower.

Figure~\ref{fig:dirfields} 
\begin{figure}
  \includegraphics[width=0.49\textwidth]{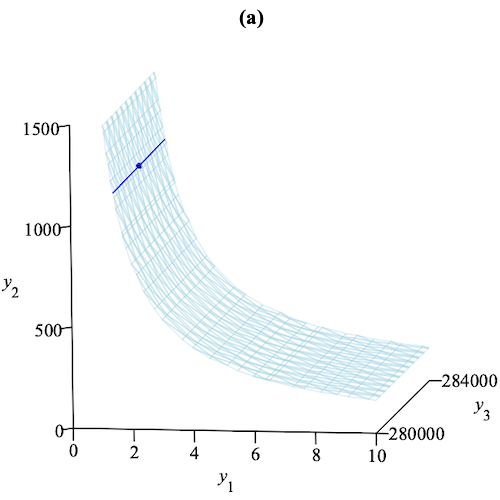}\hfill
  \includegraphics[width=0.49\textwidth]{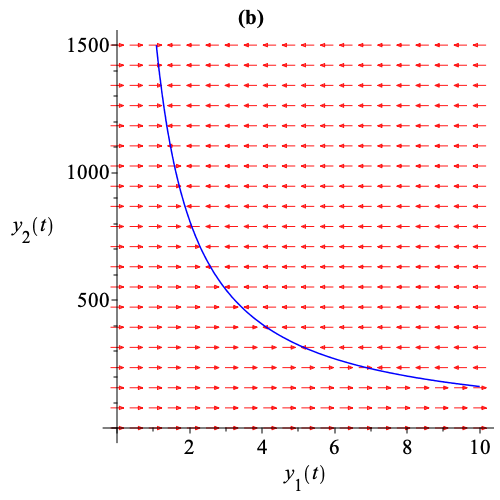}\\[1ex]
  \includegraphics[width=0.49\textwidth]{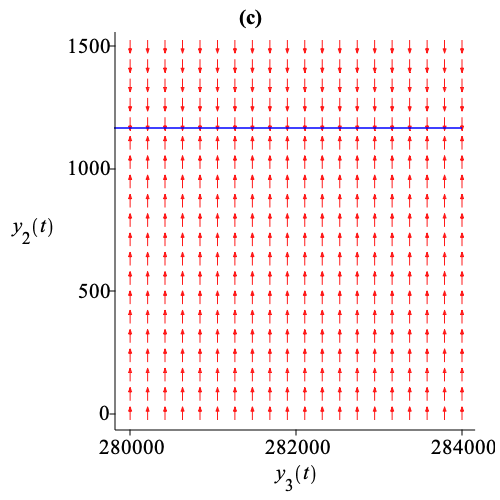}\hfill
  \includegraphics[width=0.49\textwidth]{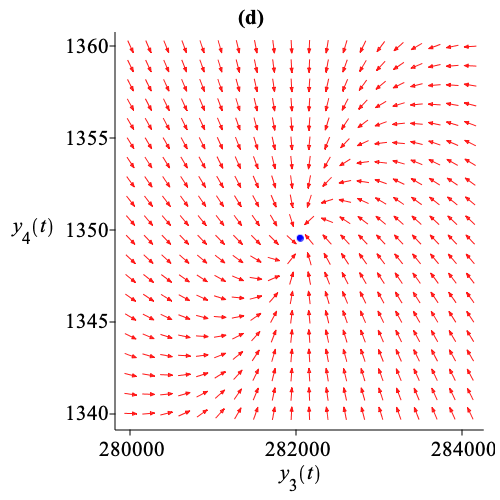}\\
  \caption{Critical manifolds and direction fields of our reductions of \biomd{716}.
    \textbf{(a)}~The surface is the critical manifold
    $\mM_1^* \subseteq \mM_0^* = \mU$ projected from $\RR^4$ into real
    $(y_1, y_2, y_3)$-space. The line located at
    $(y_1, y_2) \approx (1.4, 1166.8)$ is the critical submanifold
    $\mM_2^* \subseteq \mM_1^*$. The dot located at
    $(y_1, y_2, y_3) \approx (1.4, 1166.8, 282049.2)$ is the critical submanifold
    $\mM_3^* \subseteq \mM_2^*$. Both $\mM_1^*$ and $\mM_2^*$ extend to
    $\pm \infty$ in both $y_3$ and $y_4$ direction, and $\mM_3^*$ is located near
    $(1.4, 1166.8, 282049.2, 1349.6)$.\ \textbf{(b)}~The direction field of
    $T_1^*\circ R_1^*$ on $\mM_0^*=\mU$ projected from $\RR^4$ into real
    $(y_1, y_2)$-space. The curve is the critical submanifold
    $\mM_1^* \subseteq \mM_0^*$.\ \textbf{(c)}~The direction field of
    $T_2^*\circ R_2^*$ on $\mM_1^*$ projected from $\RR^4$ into real
    $(y_3, y_2)$-space. The line is the critical submanifold
    $\mM_2^* \subseteq \mM_1^*$. The system here is slower than the one in (b) by a
    factor of $125$.\ \textbf{(d)}~The direction field of $T_3^*\circ R_3^*$ on
    $\mM_2^*$ projected from $\RR^4$ into real $(y_3, y_4)$-space. The dot is the
    critical submanifold $\mM_3^* \subseteq \mM_2^*$. The system here is slower than the
    one in (c) by another factor of $125$.\label{fig:dirfields}}
\end{figure}
visualizes the direction fields of $T_1^* \circ R_1^*$,
\dots,~$T_3^* \circ R_3^*$ on their respective manifolds $\mM_0^*$,
\dots,~$\mM_2^*$ along with their respective critical manifolds $\mM_1^*$, \dots,
$\mM_3^*$, where $\mM_3^*$ can be derived from $\mM_2^*$ by additionally
equating the vector field of $T_3^* \circ R_3^*$ to zero:
\begin{displaymath}
  \textstyle M_3^* = \bigl[
    y_1 = \frac{4652377}{3335005}, \quad
    y_2 = \frac{5428474920}{4652377}, \quad
    y_3 = \frac{7051228977}{25000}, \quad
    y_4 = \frac{441466240042010928888}{327120760850763125}
    \bigr].
\end{displaymath}
This list $M_3^*$ does not explicitly occur in the output. However, its preimage
$M_3$ is constructed in Algorithm~\ref{alg:cls} and justifies the presence of
$(M_2, T_3, R_3)$ in the output there. The total computation time was 0.906~s.

This multiple-time scale reduction of the bird flu model emphasizes a cascade of
successive relaxations of model variables. First, the population of susceptible
birds relaxes. This relaxation is illustrated in Fig.~\ref{fig:dirfields}(b).
Then, the population of infected birds relaxes as shown in
Fig.~\ref{fig:dirfields}(c). Finally, the populations of susceptible and
infected humans relax to a stable steady state as shown in
Fig.~\ref{fig:dirfields}(d), following a reduced dynamics described by $T_3^*$.

\subsection{TGF-$\beta$ Pathway}
\biomd{101} is a simple representation of the TGF-$\beta$ signaling pathway that
plays a central role in tissue homeostasis and morphogenesis, as well as in
numerous diseases such as fibrosis and cancer \cite{VilarJansen:06}.
Concentrations over time of species \textit{Receptor~1} (\texttt{RI}),
\textit{Receptor~2} (\texttt{RII}), \textit{Ligand receptor complex-plasma
  membrane} (\texttt{lRIRII}), \textit{Ligand receptor complex-endosome}
(\texttt{lRIRII_endo}), \textit{Receptor~1 endosome} (\texttt{RI_endo}), and
\textit{Receptor~2 endosome} (\texttt{RII_endo}), are mapped to differential
variables $y_1$, \dots,~$y_6$, respectively. The original \biomd{101} has a change
of \texttt{ligand} concentration at time $t=2500$. For our computation here, we
ignore this discrete event. The input system is given by
\begin{align*}
S = \bigl[ \textstyle{\diff{}{t} y_{1}} &= \textstyle{- \frac{1}{100} y_{1} y_{2} - \frac{90277}{250000} y_{1} + \frac{33333}{1000000} y_{4} + \frac{33333}{1000000} y_{5} + 8},  \\
\textstyle{\diff{}{t} y_{2}} &= \textstyle{- \frac{1}{100} y_{1} y_{2} - \frac{90277}{250000} y_{2} + \frac{33333}{1000000} y_{4} + \frac{33333}{1000000} y_{6} + 4},  \\
\textstyle{\diff{}{t} y_{3}} &= \textstyle{\frac{1}{100} y_{1} y_{2} - \frac{152777}{250000} y_{3}},  \\
\textstyle{\diff{}{t} y_{4}} &= \textstyle{\frac{33333}{100000} y_{3} - \frac{33333}{1000000} y_{4}},  \\
\textstyle{\diff{}{t} y_{5}} &= \textstyle{\frac{33333}{100000} y_{1} - \frac{33333}{1000000} y_{5}},  \\
\textstyle{\diff{}{t} y_{6}} &= \textstyle{\frac{33333}{100000} y_{2} - \frac{33333}{1000000} y_{6}} \bigr].
\end{align*}

We choose $\varepsilon_* = \frac{1}{5}$, $p = 1$, and select
$D = (0, -4, -1, -2, -1, -5)$ from the tropical equilibrium. Our
back-transformed reduced systems read as follows:
\begin{align*}
  M_{0}^* = \bigl[ &  \  \bigr],  
  & T_{1}^* = \bigl[ & \textstyle{\diff{}{t} y_{1}} = \textstyle{5 \cdot \left(- \frac{1}{500} y_{1} y_{2} + \frac{8}{5}\right)} \bigr],  \\
  && R_{1}^* = \bigl[ & \textstyle{\diff{}{t} y_{2}} = 0, 
  \quad \textstyle{\diff{}{t} y_{3}} = 0, 
  \quad \textstyle{\diff{}{t} y_{4}} = 0, 
  \quad \textstyle{\diff{}{t} y_{5}} = 0, \\
  &&& \textstyle{\diff{}{t} y_{6}} = 0 \bigr],  \\[1ex]
  M_{1}^* = \bigl[ & \textstyle{y_{1}y_{2}} = \textstyle{800} \bigr], 
  & T_{2}^* = \bigl[ & \textstyle{\diff{}{t} y_{3}} = \textstyle{1 \cdot \left(- \frac{152777}{250000} y_{3} + 8\right)} \bigr],  \\
  && R_{2}^* = \bigl[ & \textstyle{\diff{}{t} y_{2}} = 0, 
  \quad \textstyle{\diff{}{t} y_{4}} = 0, 
  \quad \textstyle{\diff{}{t} y_{5}} = 0, 
  \quad \textstyle{\diff{}{t} y_{6}} = 0 \bigr],  \\[1ex]
  M_{2}^* = \bigl[ & \textstyle{y_{1}y_{2}} = \textstyle{800}, 
  & T_{3}^* = \bigl[ & \textstyle{\diff{}{t} y_{2}} = \textstyle{\frac{1}{5} \cdot \left(- \frac{90277}{50000} y_{2} + \frac{33333}{200000} y_{6}\right)} \bigr], \\
  & \textstyle{y_{3}} = \textstyle{\frac{2000000}{152777}} \bigr],  
  & R_{3}^* = \bigl[ & \textstyle{\diff{}{t} y_{4}} = 0, 
  \quad \textstyle{\diff{}{t} y_{5}} = 0, 
  \quad \textstyle{\diff{}{t} y_{6}} = 0 \bigr].
\end{align*}
The total computation time was 0.906~s.

The multiple-time scale reduction of the TGF-$\beta$ model emphasizes a cascade of
successive relaxations of concentrations of different species. First, the
concentration of receptor~1 relaxes rapidly. Then follows the membrane complex,
and, even slower, the relaxation of receptor~2.

\subsection{Caspase Activation Pathway}
\biomd{102} is a quantitative kinetic model that examines the intrinsic pathway
of caspase activation that is essential for apoptosis induction by various
stimuli including cytotoxic stress \cite{Legewie:06}. Species concentrations
over time are mapped to differential variables $y_1$, \dots,~$y_{13}$ as described
in Table \ref{tab:bm102}.
\begin{table}
  \centering
  \caption{Mapping of species concentrations to differential variables for
    \biomd{102}\label{tab:bm102}}

  \begin{tabular}{lll}
    \hline
    Species & Species variable & Differential variable \\
    \hline
      \textit{APAF-1}                                & \texttt{A}         & $y_{1}$  \\
      \textit{Caspase 9}                             & \texttt{C9}        & $y_{2}$  \\
      \textit{Caspase 9-XIAP complex}                & \texttt{C9X}       & $y_{3}$  \\
      \textit{XIAP}                                  & \texttt{X}         & $y_{4}$  \\
      \textit{APAF-1-Caspase 9-XIAP complex}         & \texttt{AC9X}      & $y_{5}$  \\
      \textit{APAF-1-Caspase 9 complex}              & \texttt{AC9}       & $y_{6}$  \\
      \textit{Caspase 3}                             & \texttt{C3}        & $y_{7}$  \\
      \textit{Caspase 3 cleaved}                     & \texttt{C3_star}   & $y_{8}$  \\
      \textit{Caspase 3 cleaved-XIAP complex}        & \texttt{C3_starX}  & $y_{9}$  \\
      \textit{Caspase 9 cleaved-XIAP complex}        & \texttt{C9_starX}  & $y_{10}$ \\
      \textit{Caspase 9 cleaved}                     & \texttt{C9_star}   & $y_{11}$ \\
      \textit{APAF-1-Caspase 9 cleaved complex}      & \texttt{AC9_star}  & $y_{12}$ \\
      \textit{APAF-1-Caspase 9 cleaved-XIAP complex} & \texttt{AC9_starX} & $y_{13}$ \\
    \hline
  \end{tabular}
\end{table}
The input system is given by
\begin{align*}
S = \bigl[ \phantom{{}_3} \textstyle{\diff{}{t} y_{1}} &= \textstyle{- \frac{1}{500} y_{1} y_{2} - \frac{1}{500} y_{1} y_{3} - \frac{1}{500} y_{1} y_{10} - \frac{1}{500} y_{1} y_{11} - \frac{1}{1000} y_{1} + \frac{1}{10} y_{5} + \frac{1}{10} y_{6} + \frac{1}{10} y_{12}}  \\
 & \quad\quad \textstyle{{} + \frac{1}{10} y_{13} + \frac{1}{50}},  \\
\textstyle{\diff{}{t} y_{2}} &= \textstyle{- \frac{1}{500} y_{1} y_{2} - \frac{1}{1000} y_{2} y_{4} - \frac{1}{5000} y_{2} y_{8} - \frac{1}{1000} y_{2} + \frac{1}{1000} y_{3} + \frac{1}{10} y_{6} + \frac{1}{50}},  \\
\textstyle{\diff{}{t} y_{3}} &= \textstyle{- \frac{1}{500} y_{1} y_{3} + \frac{1}{1000} y_{2} y_{4} - \frac{1}{500} y_{3} + \frac{1}{10} y_{5}},  \\
\textstyle{\diff{}{t} y_{4}} &= \textstyle{- \frac{1}{1000} y_{2} y_{4} + \frac{1}{1000} y_{3} - \frac{1}{1000} y_{4} y_{6} - \frac{3}{1000} y_{4} y_{8} - \frac{1}{1000} y_{4} y_{11} - \frac{1}{1000} y_{4} y_{12} - \frac{1}{1000} y_{4}}  \\
 & \quad\quad \textstyle{{} + \frac{1}{1000} y_{5} + \frac{1}{1000} y_{9} + \frac{1}{1000} y_{10} + \frac{1}{1000} y_{13} + \frac{1}{25}},  \\
\textstyle{\diff{}{t} y_{5}} &= \textstyle{\frac{1}{500} y_{1} y_{3} + \frac{1}{1000} y_{4} y_{6} - \frac{51}{500} y_{5}},  \\
\textstyle{\diff{}{t} y_{6}} &= \textstyle{\frac{1}{500} y_{1} y_{2} - \frac{1}{1000} y_{4} y_{6} + \frac{1}{1000} y_{5} - \frac{1}{5000} y_{6} y_{8} - \frac{101}{1000} y_{6}},  \\
\textstyle{\diff{}{t} y_{7}} &= \textstyle{- \frac{1}{200000} y_{2} y_{7} - \frac{7}{20000} y_{6} y_{7} - \frac{1}{20000} y_{7} y_{11} - \frac{7}{2000} y_{7} y_{12} - \frac{1}{1000} y_{7} + \frac{1}{5}},  \\
\textstyle{\diff{}{t} y_{8}} &= \textstyle{\frac{1}{200000} y_{2} y_{7} - \frac{3}{1000} y_{4} y_{8} + \frac{7}{20000} y_{6} y_{7} + \frac{1}{20000} y_{7} y_{11} + \frac{7}{2000} y_{7} y_{12} - \frac{1}{1000} y_{8} + \frac{1}{1000} y_{9}},  \\
\textstyle{\diff{}{t} y_{9}} &= \textstyle{\frac{3}{1000} y_{4} y_{8} - \frac{1}{500} y_{9}},  \\
\textstyle{\diff{}{t} y_{10}} &= \textstyle{- \frac{1}{500} y_{1} y_{10} + \frac{1}{1000} y_{4} y_{11} - \frac{1}{500} y_{10} + \frac{1}{10} y_{13}},  \\
\textstyle{\diff{}{t} y_{11}} &= \textstyle{- \frac{1}{500} y_{1} y_{11} + \frac{1}{5000} y_{2} y_{8} - \frac{1}{1000} y_{4} y_{11} + \frac{1}{1000} y_{10} - \frac{1}{1000} y_{11} + \frac{1}{10} y_{12}},  \\
\textstyle{\diff{}{t} y_{12}} &= \textstyle{\frac{1}{500} y_{1} y_{11} - \frac{1}{1000} y_{4} y_{12} + \frac{1}{5000} y_{6} y_{8} - \frac{101}{1000} y_{12} + \frac{1}{1000} y_{13}},  \\
\textstyle{\diff{}{t} y_{13}} &= \textstyle{\frac{1}{500} y_{1} y_{10} + \frac{1}{1000} y_{4} y_{12} - \frac{51}{500} y_{13}} \bigr].
\end{align*}

We choose $\varepsilon_* = \frac{1}{2}$, $p = 1$, and select
$D = (-4, 2, 3, 5, 5, 4, -6, -8, -4, -2, -2, 0, 0)$ from the tropical
equilibration. Our back-transformed reduced systems read as follows:
\begin{align*}
M_{0}^* =
\bigl[ & \ \bigr],
& T_{1}^* = \bigl[ & \textstyle{\diff{}{t} y_{4}} = \textstyle{1 \cdot \left(- \frac{3}{1000} y_{4} y_{8} + \frac{1}{25}\right)} \bigr],  \\
&& R_{1}^* = \bigl[ & \textstyle{\diff{}{t} y_{5}} = 0, \quad \textstyle{\diff{}{t}
  y_{6}} = 0, \quad \textstyle{\diff{}{t} y_{12}} = 0,
\quad \textstyle{\diff{}{t} y_{13}} = 0, \\
&&& \textstyle{\diff{}{t} y_{2}} = 0, \quad \textstyle{\diff{}{t} y_{3}} = 0, \quad
\textstyle{\diff{}{t} y_{10}} = 0,
\quad \textstyle{\diff{}{t} y_{11}} = 0, \\
&&& \textstyle{\diff{}{t} y_{1}} = 0, \quad \textstyle{\diff{}{t} y_{7}} = 0, \quad
\textstyle{\diff{}{t} y_{9}} = 0,
\quad \textstyle{\diff{}{t} y_{8}} = 0 \bigr],  \\[1ex]
  M_{1}^* = \bigl[ & \textstyle{y_{4}y_{8}} = \textstyle{\frac{40}{3}} \bigr], 
  & T_{2}^* = \bigl[ & \textstyle{\diff{}{t} y_{5}} = \textstyle{\frac{1}{8} \cdot \left(\frac{2}{125} y_{1} y_{3} - \frac{102}{125} y_{5}\right)},  \\
  &&& \textstyle{\diff{}{t} y_{6}} = \textstyle{\frac{1}{8} \cdot \left(\frac{2}{125} y_{1} y_{2} - \frac{101}{125} y_{6}\right)},  \\
  &&& \textstyle{\diff{}{t} y_{12}} = \textstyle{\frac{1}{8} \cdot \left(\frac{2}{125} y_{1} y_{11} - \frac{101}{125} y_{12}\right)},  \\
  &&& \textstyle{\diff{}{t} y_{13}} = \textstyle{\frac{1}{8} \cdot \left(\frac{2}{125} y_{1} y_{10} - \frac{102}{125} y_{13}\right)} \bigr],  \\
  && R_{2}^* = \bigl[ & \textstyle{\diff{}{t} y_{2}} = 0, 
  \quad \textstyle{\diff{}{t} y_{3}} = 0, 
  \quad \textstyle{\diff{}{t} y_{10}} = 0, 
  \quad \textstyle{\diff{}{t} y_{11}} = 0, \\
  &&& \textstyle{\diff{}{t} y_{1}} = 0, 
  \quad \textstyle{\diff{}{t} y_{7}} = 0, 
  \quad \textstyle{\diff{}{t} y_{9}} = 0, 
  \quad \textstyle{\diff{}{t} y_{8}} = 0 \bigr],  \\[1ex]
  M_{2}^* = \bigl[ & \textstyle{y_{4}y_{8}} = \textstyle{\frac{40}{3}}, 
  & T_{3}^* = \bigl[ & \textstyle{\diff{}{t} y_{2}} = \textstyle{\frac{1}{16} \cdot \left(- \frac{2}{625} y_{2} y_{8} + \frac{8}{25}\right)} \bigr], \\
  & \textstyle{y_{1} y_{3} - 51 y_{5}} = 0,  
  & R_{3}^* = \bigl[ & \textstyle{\diff{}{t} y_{3}} = 0, 
  \quad \textstyle{\diff{}{t} y_{10}} = 0, 
  \quad \textstyle{\diff{}{t} y_{11}} = 0, 
  \quad \textstyle{\diff{}{t} y_{1}} = 0, \\
  & \textstyle{2 y_{1} y_{2} - 101 y_{6}} = 0,  
  && \textstyle{\diff{}{t} y_{7}} = 0, 
  \quad \textstyle{\diff{}{t} y_{9}} = 0,
  \quad \textstyle{\diff{}{t} y_{8}} = 0 \bigr]. \\
  & \textstyle{2 y_{1} y_{11} - 101 y_{12}} = 0, \\
  & \textstyle{y_{1} y_{10} - 51 y_{13}} = 0 \bigr],
\end{align*}
The total computation time was 8.547~s, of which Algorithm~\ref{alg:te} took 6.188~s.

The multiple-time scale reduction of the caspase activation model emphasizes a
cascade of successive relaxations. First, the inhibitor of apoptosis XIAP binds
rapidly to the cleaved caspase. Then, the four APAF complexes are formed.
Finally, the Caspase~9 is recruited to the apoptosome.

\subsection{Avian Influenza Bird-to-Human Transmission}
\biomd{709} describes bird-to-human transmission of different strains of avian
influenza A viruses, such as H5N1 and H7N9 \cite{Liu:17}. Species concentrations
over time of \textit{Susceptible avians} (\texttt{S_a}), \textit{Infected
  avians} (\texttt{I_a}), \textit{Susceptible humans} (\texttt{S_h}),
\textit{Infected humans} (\texttt{I_h}), and \textit{Recovered humans}
(\texttt{R_h}) are mapped to differential variables $y_1$, \dots,~$y_5$,
respectively. The input system is given by
\begin{align*}
S = \bigl[ & \textstyle{\diff{}{t} y_{1}} = \textstyle{- \frac{1}{8000000000} y_{1}^{3} + \frac{127}{20000000} y_{1}^{2} - \frac{9}{500000000} y_{1} y_{2} - \frac{1}{200} y_{1}},  \\
&\textstyle{\diff{}{t} y_{2}} = \textstyle{\frac{9}{500000000} y_{1} y_{2} - \frac{37123}{50000000} y_{2}},  \\
&\textstyle{\diff{}{t} y_{3}} = \textstyle{- \frac{3}{500000000} y_{2} y_{3} - \frac{391}{10000000} y_{3} + 30},  \\
&\textstyle{\diff{}{t} y_{4}} = \textstyle{\frac{3}{500000000} y_{2} y_{3} - \frac{4445391}{10000000} y_{4}},  \\
&\textstyle{\diff{}{t} y_{5}} = \textstyle{\frac{1}{10} y_{4} - \frac{391}{10000000} y_{5}} \bigr].
\end{align*}

We choose $\varepsilon_* = \frac{1}{5}$, $p = 1$, and select
$D = (-7, 0, -8, 3, -2)$ from the tropical equilibration. Our back-transformed
reduced systems read as follows:
\begin{align*}
  M^*_{0} &= \bigl[   \  \bigr],
            &T^*_{1} = \bigl[&  \textstyle{\diff{}{t} y_{1}} = 1 \cdot \left(\textstyle{- \frac{1}{8000000000} y_{1}^{3} + \frac{127}{20000000} y_{1}^{2}}\right) \bigr],  \\
        && R_{1}^* = \bigl[&  \textstyle{\diff{}{t} y_{4}} = 0,
        \quad \textstyle{\diff{}{t} y_{2}} = 0,
        \quad \textstyle{\diff{}{t} y_{3}} = 0,
        \quad \textstyle{\diff{}{t} y_{5}} = 0 \bigr],  \\[1ex]
 M^*_{1} &=  \bigl[   \textstyle{y_{1}^{3} - 50800 y_{1}^{2}} = 0
          \bigr],
            & T^*_{2} = \bigl[&  \textstyle{\diff{}{t} y_{4}} = \textstyle{\frac{1}{5}\cdot\left(\frac{3}{100000000} y_{2} y_{3} - \frac{4445391}{2000000} y_{4}\right)} \bigr],  \\
        && R_{2}^* = \bigl[&  \textstyle{\diff{}{t} y_{2}} = 0,
        \quad \textstyle{\diff{}{t} y_{3}} = 0,
        \quad \textstyle{\diff{}{t} y_{5}} = 0 \bigr].
\end{align*}
The total computation time was 0.578~s.

The multiple-time scale reduction of this avian influenza model emphasizes a
cascade of successive relaxations of different model variables. First, the
susceptible bird population relaxes rapidly. The reduced equation $T_1$ and
manifold $M_1$ suggest that the bird population dynamics is of the Allee type
and evolves toward the stable extinct state. It follows the relaxation of
infected human population that also evolves toward the extinct state, the end of
the epidemics.

\section{Some Remarks on Complexity}\label{se:complexity}
A detailed complexity analysis of our approach is beyond the scope of this
article. We collect some remarks on the asymptotic worst-case complexity of
various computational steps required by our algorithms:
\begin{enumerate}[(i)]
\item\label{item:tarski} the existential decision problem over real closed
  fields, for which we use SMT solving over \texttt{QF\_NRA} in 
  Algorithm~\ref{alg:ha};
\item\label{item:lp} linear programming, for which we used SMT solving over
  \texttt{QF\_LRA} in Algorithm~\ref{alg:tropicald};
\item\label{item:ilp} integer linear programming, for which we used SMT solving
  over \texttt{QF\_LIA} in Algorithm~\ref{alg:sdc}.
\end{enumerate}
Problem (\ref{item:tarski}) is in single exponential time \cite{Grigoriev:88a}.
Problem (\ref{item:lp}) is in polynomial \cite{Khachiyan:79a} time. Problem
(\ref{item:ilp}) is NP-complete; a proof can be found in \cite[Theorem
18.1]{Schrijver86}, where it is essentially attributed to \cite{cook71}. With
all these problems, the dominant complexity parameter is the number of
variables, which in our context corresponds to the number $n$ of differential
variables. Biological models impose reasonable upper bounds on $n$, which
corresponds to the number of species occurring there. When $n$ is bounded,
problem (\ref{item:tarski}) becomes polynomial \cite{Grigoriev:88a}. The same
holds for problem (\ref{item:ilp}); a proof based on Lenstra's algorithm
\cite{Lenstra:83} can be found in \cite[Corollary 18.7a]{Schrijver86}.

When characterizing the decision in l.\ref{td:sat} of
Algorithm~\ref{alg:tropicald} as linear programming in (\ref{item:lp}) above, we
tacitly assume that it is not applied to the tropical equilibration $\Pi$ as a
whole but to each contained polyhedron independently. The number of contained
polyhedra is in turn exponential in $n$, in the worst case. The motivation for
the computation of a disjunctive normal form in Algorithm~\ref{alg:tropicalize}
is to support the choice of a suitable point $(d_1, \dots, d_n)$ in
Algorithm~\ref{alg:tropicald} by providing information on the geometry of the
tropical equilibration as a union of polyhedra. It is possible to omit this and
accordingly drop the computation of the disjunctive normal form in
l.\ref{te:dnf} of Algorithm~\ref{alg:tropicalize} altogether. In that case,
solving in l.\ref{td:sat} of Algorithm~\ref{alg:tropicald} is applied to the
formula $\bigwedge \bigvee \bigwedge P$ instead of $\Pi$. This is a linear decision problem over ordered
fields, which is NP-complete \cite{GathenSieveking:76a}, and solutions can be
found in single exponential time \cite{Weispfenning:88}.

Efficient implementations do not necessarily use theoretically optimal
algorithms. At the time of writing, SMT solving and corresponding decision
procedures are a very active field of research and thus subject to constant
change. We deliberately refrain from going into details about the current state
of the art at this point.

Beyond the examples and computation times discussed here, we currently have no
systematic empirical data that would allow substantial statements on the
practical performance and limits of our implementations.

\section{Concluding Remarks}\label{se:conclusion}

We provided a symbolic method for automated model reduction of biological
networks described by ordinary differential equations with multiple time scales.
Our method is applicable to systems with two time scales or more, superseding
traditional slow-fast reduction methods that can cope with\linebreak only two
time scales. We also proposed, for the first time, the automatic verification of
hyperbolicity conditions required for the validity of the reduction. Our
theoretical framework is accompanied by rigorous algorithms and prototypical
implementations, which we successfully applied to real-world problems from the
BioModels database \cite{le2006biomodels}.

We would like to list some open points and possible extensions of our research
here. Our reduction algorithm is based on a fixed scaling
\eqref{fullscalesimplifiedint} leading to a fixed ordering of the time scales of
different variables. In our reduction scheme, different variables relax
hierarchically, first the fastest ones, then the second fastest, and finally the
slowest ones, which justifies our geometric picture of nested invariant
manifolds. However, there are situations, e.g.~in models of relaxation
oscillations, when the ordering of time scales changes with time: variables that
were fast can become slow at a later time, and vice versa. In order to cope with
such situations, one would like to use different scalings for different time
segments. One attempt to implement such a procedure has been provided in
\cite{sommer2016hybrid}.

Although our proposed method identifies the full hierarchy of time scales, the
subsequent reduction may stop early in this hierarchy when hyperbolic
attractivity is not satisfied at some stage. One possible reason is the presence
of conservation laws, also known as first integrals, at the given reduction
stage. Such conservation laws force an eigenvalue zero for the Jacobian. A
theorem by Schneider and Wilhelm \cite{SchneiderWilhelm:00a} can be employed to
reduce such a setting to the hyperbolically attractive case. As for the behavior
of first integrals when proceeding to the reduced system, see the discussion of
the non-standard case in \cite{lawa} for two time scales; an extension to
multiple time scales should be straightforward. Work in progress is concerned
with the introduction of new slow variables, one for each independent
conservation law of the fast subsystem. This is applied to networks with
multiple time scales and approximate linear and polynomial conservation laws.

More generally, it is of interest to consider cases when hyperbolic attractivity
fails but hyperbolicity still holds: In such cases, Cardin and Teixeira show
there still exist invariant manifolds \cite{cartex}. Testing for hyperbolicity
is more involved than testing for hyperbolic attractivity, but in theory it is
well understood, and there exists an algorithmic approach due to Routh
\cite{gantmacher}. In the case of hyperbolicity, but not attractivity, the
ensuing global dynamics may be quite interesting; for instance slow-fast cycles
may appear.

Concerning differentiability requirements, we check in Sect.~\ref{se:diff} for
smoothness of the full system. However, Fenichel's results, and in principle
also those by Cardin and Teixeira, require only sufficient finite
differentiability. Therefore, given a differential equation system and a
scaling, invariant manifolds and corresponding reduced systems exist for $C^p$
functions with fixed $p<\infty$. Going through the details will involve intricate
analysis that is left to future work.

In the introduction we sketched a Michaelis--Menten system abstracting from the
known numerical values for the reaction rate constants $k_1$, $k_{-1}$, $k_2$.
It would be indeed interesting to work on such parametric data. In the presence
of parameters, one would consider effective quantifier elimination over real
closed fields
\cite{CollinsHong:91,Weispfenning:97b,DolzmannSturm:98a,Kosta:16a,Sturm:17a,Sturm:ISSAC2018}
as a generalization of SMT solving. Robust implementations are freely available
\cite{Brown:03a,DolzmannSturm:97a} and well supported. They have been
successfully applied to problems in chemical reaction network theory during the
past decade
\cite{SturmWeber:08a,SturmWeber:09a,WeberSturm:11a,ErramiSeiler:11a,ErramiEiswirth:15a,BradfordDavenport:17c,EnglandErrami:17b,BradfordDavenport:19a,Grigoriev2020,Seiler2020}.
Such a generalization is not quite straightforward. With the tropical scaling in
Sect.~\ref{se:tropscale}, Algorithm~\ref{alg:tropicalc} would introduce
logarithms of polynomials in the parametric coefficients, which is not
compatible with the logic framework used here. Similar tropicalization methods,
which are unfortunately not compatible with our abstract view on scaling in
Sect.~\ref{se:scale}, require only logarithms of individual parametric
coefficients \cite{sgfwr}. Such a more special form would allow the use of
abstraction in the logic engine.

From a point of view of user-oriented software, it would be most desirable to
develop automatic strategies for determining good values for $\varepsilon_*$ and for
choices of $(d_1, \dots, d_n)$ from the tropical equilibration in
Algorithm~\ref{alg:tropicald}.

\section*{Acknowledgments}
We are grateful to the anonymous referees for their thorough reviews and
numerous helpful comments.

This work has been supported by the interdisciplinary bilateral project
ANR-17-CE40-0036 and DFG-391322026 SYMBIONT
\cite{BoulierFages:18b,BoulierFages:18a}. The SYMBIONT project owes much to the
interdisciplinary perspective, the scientific vision, and the enthusiasm of our
colleague Andreas Weber, who unexpectedly passed away in March 2020. Andreas was
also a major proponent of and contributor to tropical methods for the reduction
of reaction networks, which are continued and extended in the present work. We
miss him as an extraordinary scientist and person.

\providecommand{\noop}[1]{}

\appendix
\section{Further Computational Examples}\label{app:examples}

Recall that in Sect.~\ref{se:compex} we have discussed computations for several
systems from the BioModels database \cite{le2006biomodels}. While the focus
there was on biological results, we discuss here examples where reduction stops
at $\ell < m$ for various reasons. Again, we have conducted our computations on a
standard desktop computer with a 3.3~GHz 6-core Intel 5820K CPU and 16~GB of
main memory. Computation times listed are CPU times.

\subsection{Hypertoxiticy of a Painkiller}
\biomd{609} describes the metabolism and the related hepatotoxicity of
acetaminophen, a pain killer \cite{Reddyhoff:15}. The species concentrations
over time of \textit{Sulphate PAPS}, \textit{GSH}, \textit{NAPQI},
\textit{Paracetamol APAP}, and \textit{Protein adducts} are mapped to
differential variables $y_1$, \dots,~$y_5$, respectively. The input system is given
by
\begin{align*}
S = \bigl[ & \textstyle{\diff{}{t} y_{1}} = \textstyle{- 226000000000000 y_{1} y_{4} - 2 y_{1} + \frac{53}{2000000000000000}},  \\
&\textstyle{\diff{}{t} y_{2}} = \textstyle{- 1600000000000000000 y_{2} y_{3} - 2 y_{2} + \frac{687}{50000000000000000}},  \\
&\textstyle{\diff{}{t} y_{3}} = \textstyle{- 1600000000000000000 y_{2} y_{3} - \frac{220063}{2000} y_{3} + \frac{63}{200} y_{4}},  \\
&\textstyle{\diff{}{t} y_{4}} = \textstyle{- 226000000000000 y_{1} y_{4} + \frac{63}{2000} y_{3} - \frac{661}{200} y_{4}},  \\
&\textstyle{\diff{}{t} y_{5}} = \textstyle{110y_{3}} \bigr].
\end{align*}
Since there is only one monomial on the right hand side of the equation for
$\diff{}{t}y_5$, equilibration is impossible. This causes Algorithm~\ref{alg:te}
to return in l.\ref{te:dnf} a disjunctive normal form $\Pi$ equivalent to
``false,'' which describes the empty set. Hence Algorithm~\ref{alg:tropicald}
returns $\bot$, and Algorithm~\ref{alg:scale} returns the empty list. The total
computation time was 0.006~s.

\subsection{Transmission Dynamics of Rabies}
\biomd{726} examines the transmission dynamics of rabies between dogs and humans
\cite{Ruan:17}. Species concentrations over time are mapped to differential
variables $y_1$, \dots,~$y_{8}$ as described in Table \ref{tab:bm726}.
\begin{table}
  \centering
  \caption{Mapping of species concentrations to differential variables for
    \biomd{726}\label{tab:bm726}}
  
  \begin{tabular}{lll}
    \hline
    Species                & Species variable  & Differential variable \\
    \hline
    \textit{Susceptible dogs}   & \texttt{S_d} & $y_1$ \\
    \textit{Exposed dogs}       & \texttt{E_d} & $y_2$ \\
    \textit{Infectious dogs}    & \texttt{I_d} & $y_3$ \\
    \textit{Recovered dogs}     & \texttt{R_d} & $y_4$ \\
    \textit{Susceptible humans} & \texttt{S_h} & $y_5$ \\
    \textit{Exposed humans}     & \texttt{E_h} & $y_6$ \\
    \textit{Infectious humans}  & \texttt{I_h} & $y_7$ \\
    \textit{Recovered humans}   & \texttt{R_h} & $y_8$ \\
    \hline
  \end{tabular}
\end{table}
The input system is given by
\begin{align*}
S = \bigl[ & \textstyle{\diff{}{t} y_{1}} = \textstyle{- \frac{79}{500000000} y_{1} y_{3} - \frac{17}{100} y_{1} + \frac{18}{5} y_{2} + y_{4} + 3000000},  \\
&\textstyle{\diff{}{t} y_{2}} = \textstyle{\frac{79}{500000000} y_{1} y_{3} - \frac{617}{100} y_{2}},  \\
&\textstyle{\diff{}{t} y_{3}} = \textstyle{\frac{12}{5} y_{2} - \frac{27}{25} y_{3}},  \\
&\textstyle{\diff{}{t} y_{4}} = \textstyle{\frac{9}{100} y_{1} + \frac{9}{100} y_{2} - \frac{27}{25} y_{4}},  \\
&\textstyle{\diff{}{t} y_{5}} = \textstyle{- \frac{229}{100000000000000} y_{3} y_{5} - \frac{3}{1000} y_{5} + \frac{18}{5} y_{6} + y_{8} + 15400000},  \\
&\textstyle{\diff{}{t} y_{6}} = \textstyle{\frac{229}{100000000000000} y_{3} y_{5} - \frac{6543}{1000} y_{6}},  \\
&\textstyle{\diff{}{t} y_{7}} = \textstyle{\frac{12}{5} y_{6} - \frac{1343}{1000} y_{7}},  \\
&\textstyle{\diff{}{t} y_{8}} = \textstyle{\frac{27}{50} y_{6} - \frac{1003}{1000} y_{8}} \bigr].
\end{align*}
We choose $\varepsilon_* = \frac{1}{5}$, $p = 1$, and Algorithm~\ref{alg:tropicald}
selects $D = (-10, -10, -11, -9, -14, -7, -8, -7)$ from the tropical
equilibration. Algorithm~\ref{alg:scale} then yields the following scaled and
truncated system with three time scales:
\begin{align*}
  T_{1} = \bigl[ 
    \textstyle{\diff{}{\tau} x_{1}} &= \textstyle{1 \cdot \left(-\frac{395}{256} x_{1} x_{3} + \frac{18}{25} x_2\right)},  \\
    \textstyle{\diff{}{\tau} x_{2}} &= \textstyle{1 \cdot \left(\frac{395}{256} x_{1} x_{3} - \frac{617}{500} x_2\right)},  \\
    \textstyle{\diff{}{\tau} x_{6}} &= \textstyle{1 \cdot \left(\frac{28625}{16384} x_{3} x_{5} - \frac{6543}{5000} x_6\right)} \bigr],  \\
  T_{2} = \bigl[ 
    \textstyle{\diff{}{\tau} x_{3}} &= \textstyle{\delta \cdot \left(\frac{12}{25} x_{2} - \frac{27}{25} x_{3}\right)},  \\
    \textstyle{\diff{}{\tau} x_{4}} &= \textstyle{\delta \cdot \left(\frac{9}{20} x_{1} + \frac{9}{20} x_{2} - \frac{27}{25} x_{4}\right)},  \\
    \textstyle{\diff{}{\tau} x_{7}} &= \textstyle{\delta \cdot \left(\frac{12}{25} x_{6} - \frac{1343}{1000} x_{7}\right)},  \\
    \textstyle{\diff{}{\tau} x_{8}} &= \textstyle{\delta \cdot \left(\frac{27}{50} x_{6} - \frac{1003}{1000} x_{8}\right)} \bigr],  \\
  T_{3} = \bigl[ 
    \textstyle{\diff{}{\tau} x_{5}} &= \textstyle{\delta^{5} \cdot \left(\frac{4928}{3125} - \frac{15}{8} x_{5}\right)} \bigr].
\end{align*}
Equating the right hand sides $F_1$ of the differential equations in $T_1$ to
zero equivalently yields
\[
M_1 = [ - 9875 x_{1} x_{3} + 4608 x_{2} = 0, \quad
49375 x_{1} x_{3} - 39488 x_{2} = 0, \quad
17890625 x_{3} x_{5} - 13400064 x_{6} = 0 ].
\] 
In l.\ref{ha:neimodels} of Algorithm~\ref{alg:ha}, $U \circ M_1$ is tested for
satisfiability. This fails, which means that the corresponding manifold $\mM_1$
is empty over the positive first orthant. Consequently,
Algorithms~\ref{alg:cls}, \ref{alg:simpl}, and \ref{alg:tback} return empty
lists. The total computation time was 0.921~s.

\subsection{Negative Feedback Loop Between Tumor Suppressor and Oncogene}
\biomd{156} describes the dynamics of a negative feedback loop between the tumor
suppressor protein p53 and the oncogene protein Mdm2 in human cells
\cite{Geva:06}. The species concentrations over time for \textit{P53}
(\texttt{x}), \textit{Mdm2} (\texttt{y}), and \textit{Precursor Mdm2}
(\texttt{y0}) are mapped to differential variables $y_1$, \dots,~$y_3$,
respectively. The input system is given by
\begin{align*}
S = \bigl[ & \textstyle{\diff{}{t} y_{1}} = \textstyle{- \frac{37}{10} y_{1} y_{2} + 2 y_{1}}, \\
& \textstyle{\diff{}{t} y_{2}} = \textstyle{- \frac{9}{10} y_{2} + \frac{11}{10} y_{3}}, \\
& \textstyle{\diff{}{t} y_{3}} = \textstyle{\frac{3}{2} y_{1} - \frac{11}{10} y_{3}} \bigr].
\end{align*}
We choose $\varepsilon_* = \frac{1}{2}$, $p = 1$, and select $D = (2, 1, 1)$.
Algorithm~\ref{alg:scale} then yields the following scaled and truncated system
with two time scales:
\begin{align*}
  T_{1} = \bigl[ 
    \textstyle{\diff{}{\tau} x_{1}} &= \textstyle{1 \cdot \left(-\frac{37}{40} x_{1} x_{2} + x_1\right)} \bigr],  \\
  T_{2} = \bigl[ 
    \textstyle{\diff{}{\tau} x_{2}} &= \textstyle{\delta \cdot \left(-\frac{9}{10} x_{2} + \frac{11}{10} x_{3}\right)},  \\
    \textstyle{\diff{}{\tau} x_{3}} &= \textstyle{\delta \cdot \left(\frac{3}{4} x_{1} - \frac{11}{10} x_{3}\right)} \bigr].
\end{align*}
Analogously to the previous section we obtain
$M_1 = [ - 37 x_{1} x_{2} + 40 x_{1} = 0 ]$, for which we find $\mM_1$ to be
non-empty over the positive first orthant in l.\ref{ha:neimodels} of
Algorithm~\ref{alg:ha}. However in l.\ref{ha:finally}, the test for hyperbolic
attractivity fails with $M_1$ and the Hurwitz conditions
\[
  \textstyle \Gamma = \left\{ \frac{37}{40}x_2 - 1 > 0 \right\},
\]
so that ``false'' is returned. Therefore, Algorithm~\ref{alg:cls} breaks the
for-loop in l.\ref{cm:break} with $k=1$ and returns the empty list in
l.\ref{cm:smalll}. Obviously, the simplified and back-translated systems are
empty lists as well. The total computation time was 0.453~s.

\subsection{CD4 T-Cells in the Spread of HIV}
\biomd{663} describes how CD4 T-cells can influence the spread of the HIV
infection \cite{Wodarz:07}. Species concentrations over time for
\textit{Infected T-cells} (\texttt{x}), \textit{Uninfected T-cells}
(\texttt{y}), and \textit{Free viruses} (\texttt{v}) are mapped to variables
$y_1$, \dots,~$y_3$, respectively. The input system is given by
\begin{align*}
S = \bigl[ & \textstyle{\diff{}{t} y_{1}} = \textstyle{- \frac{1}{10} y_{1}^{2} y_{3} - \frac{1}{10} y_{1} y_{2} y_{3} + \frac{4}{5} y_{1} y_{3} - \frac{1}{10} y_{1}},  \\
&\textstyle{\diff{}{t} y_{2}} = \textstyle{- \frac{1}{10} y_{1} y_{2} y_{3} + \frac{1}{5} y_{1} y_{3} - \frac{1}{10} y_{2}^{2} y_{3} + y_{2} y_{3} - \frac{1}{5} y_{2}},  \\
&\textstyle{\diff{}{t} y_{3}} = \textstyle{y_{2} - \frac{1}{2} y_{3}} \bigr].
\end{align*}
We choose $\varepsilon_* = \frac{1}{2}$, $p = 5$, and $D = (1, 4, 3)$. The choice of
$p=5$ causes fractional powers of numbers in the scaled and truncated system
\begin{align*}
  \textstyle T_{1} &= \textstyle \left[ \diff{}{\tau} x_{3} = 1 \cdot \left(x_{2} - x_{3}\right) \right], \\
  \textstyle T_{2} &= \textstyle \left[ \diff{}{\tau} x_{2} = \delta^{7} \cdot \left(\frac{4}{5} \sqrt[5]{4}\, x_{1} x_{3} - \frac{4}{5} \sqrt[5]{4}\, x_{2}\right) \right], \\
  \textstyle T_{3} &= \textstyle \left[ \diff{}{\tau} x_{1} = \delta^{12} \cdot \left(\frac{4}{5} \sqrt[5]{4}\, x_{1} x_{3} - \frac{4}{5} \sqrt[5]{4}\, x_{1}\right) \right].
\end{align*}
However, such input is not accepted with the SMT logic \texttt{QF\_NRA} used in
Algorithm \ref{alg:ha}. As discussed in Sect.~\ref{se:ha}, we catch the
corresponding error from the SMT solver, convert to floats, and restart, which
solves the problem.

Similarly to the previous example, the Hurwitz test in l.{20} of Algorithm
\ref{alg:ha} succeeds for $k=1$ but fails for $k=2$ in Algorithm \ref{alg:cls}.
Since there are fewer than two reduced systems, we return the empty list.
Consequently, the list of simplified reduced systems and the corresponding list
of back-transformed systems are empty as well. The total computation time was
0.390~s.

\end{document}